\newtheorem{thm}{Theorem}
\newtheorem{lem}[thm]{Lemma}
\newtheorem{cor}[thm]{Corollary}
\newtheorem{dfn}[thm]{Definition}
\newtheorem{prp}[thm]{Proposition}
\newtheorem{ass}[thm]{Assumption}
\newtheorem{rmk}[thm]{Remark}
\def\RR{{\mathbb R}}
\def\EE{{\mathbb E}}
\begin{document}

\title{Randomized embeddings with slack, and high-dimensional Approximate Nearest Neighbor}
\author[1]{Evangelos Anagnostopoulos}
\author[1]{Ioannis Z. Emiris}
\author[1]{Ioannis Psarros}
\affil[1]{Dept.\ of Informatics \& Telecommunications, University of Athens, Greece}

\maketitle

\begin{abstract}
The approximate nearest neighbor problem ($\epsilon$-ANN) in high dimensional 
Euclidean space has been mainly addressed
by Locality Sensitive Hashing (LSH), which has polynomial dependence
in the dimension, sublinear query time, 
but subquadratic space requirement.
In this paper, we 
introduce a new definition of 
``low-quality'' embeddings for metric spaces.  
It requires that, for some query point $q$, there exists an approximate nearest 
neighbor among the pre-images of the $k>1$ approximate 
nearest neighbors in the target space. Focusing on 
Euclidean spaces, we employ random projections in order to 
reduce the original problem to one in a space of dimension inversely proportional to $k$. 

The $k$ approximate nearest 
neighbors can be efficiently retrieved 
by a data structure such as BBD-trees. 
The same approach is applied 
to the problem of computing an approximate near neighbor, where we obtain a 
data structure requiring 
linear space, 
and query time in $O(d n^{\rho})$,
for $\rho\approx 1-\epsilon^2/\log(1/\epsilon)$. 
This directly implies a solution for $\epsilon$-ANN, while achieving a better exponent in the query time than the method based on BBD-trees. 
Better bounds are obtained in the case of 
doubling subsets of $\ell_2$, 
by combining our method with $r$-nets.

We implement our method in C++, and present experimental results in dimension
up to $500$ and $10^6$ points, which 
show that performance is better 
than predicted by the analysis. In addition, we compare our ANN approach 
to E2LSH, which implements LSH, and we show that the theoretical advantages of each method are reflected on their actual performance.
\end{abstract}



\section{Introduction}

Nearest neighbor searching is a fundamental computational problem with several applications in Computer Science and beyond. Let us focus on the Euclidean version of the problem. 
Let $X$ be a set of $n$ points in $d$-dimensional Euclidean space $\ell_2^d$.
We denote by $\|\cdot \|$ the inherent Euclidean norm $\| \cdot\|_2$. The problem consists in building a data structure such that for any query point $q$, one may report 
a point $p \in X$ for which $\|p- q\| \leq \|p' - q\|$, for all $p' \in X$; then $p$ is said to be a “nearest neighbor” of $q$.
However, an exact solution to high-dimensional 
nearest neighbor search in sublinear time 
requires prohibitively heavy resources. 
Thus,
most approaches focus on the less demanding and more relevant task of computing the approximate nearest neighbor, or $\epsilon$-ANN. 
Given a real parameter $\epsilon >0$, a 
$(1 +\epsilon )$-approximate nearest neighbor to a query point $q\in\RR^d$ is a point $p$ in $X$ such that 
$$
\|q- p\| \leq (1 + \epsilon ) \cdot \|q- p'\|, \mbox{ for all } p' \in X.
$$
Hence, under approximation, the answer can be any point whose distance from $q$ is at most $(1 +\epsilon )$ times 
larger than the distance between $q$ and its true nearest neighbor.

The corresponding augmented decision problem (with witness) is known as the  near neighbor problem, and is defined as follows. A data structure for the approximate near neighbor problem ($(\epsilon,R)$-ANN) satisfies the following conditions: if there exists 
some point $p$ in $X$ such that 
$\|q- p\| \leq  R$, then an algorithm solving this problem reports $p'\in X$ such that $\|q- p'\| \leq (1+\epsilon) \cdot R$, whereas if there is no point $p$ in $X$ such that 
$\|q- p\| \leq  (1+\epsilon)\cdot R$, then the algorithm reports ``Fail''. It is known that one can solve not-so-many instances of the decision problem with witness 
and obtain a solution for the 
$\epsilon$-ANN problem. 

\paragraph*{Our contribution}
Deterministic space partitioning techniques 
perform well in solving $\epsilon$-ANN when the dimension is relatively low, but
are affected by the curse of dimensionality. To address this issue,
randomized methods such as Locality Sensitive Hashing (LSH) are more efficient when the dimension is high. 
One might try applying the celebrated Johnson-Lindenstrauss Lemma, followed by standard space partitioning techniques, 
but the properties of the projected pointset are too strong for designing an overall efficient ANN search method (cf.\ Section~\ref{Sexisting}).

We introduce a notion of ``low-quality" randomized embeddings 
and we employ standard random projections \`{a} la Johnson-Lindenstrauss in order to define a mapping to $\ell_2^{d'}$, 
for 
$$
d'=O(\log \frac{n}{k}/\epsilon^2),
$$
such that an approximate nearest neighbor of the query lies among the pre-images of $k$ approximate nearest neighbors in the 
projected space. 
Moreover, an analogous statement can be made for 
the augmented decision problem of reporting an $(\epsilon,R)$-ANN, which also implies a solution for the $\epsilon$-ANN problem thanks to known results. 
Both of our methods employ optimal space, avoid the curse of dimensionality and lead to competitive query times.
While the first approach is more straightforward, the second outperforms the first in terms of complexity, due to a simpler auxiliary data structure.
However, reducing $\epsilon$-ANN to 
$(\epsilon,R)$-ANN is non-trivial and might not lead to fast methods in practice.

The first method leads to Theorem~\ref{Tann}, which offers a new randomized algorithm for
approximate nearest neighbor search with the following complexities.
Given $n$ points in $\ell_2^d$, the data structure, which is based on Balanced Box-Decomposition (BBD) trees,
requires optimal $O(dn)$ space,
and reports an $(1+\epsilon)^3$-approximate nearest neighbor with query 
time in $O(d n^{\rho} \log n)$, where function $\rho= 1-\Theta(\epsilon^2/\ln \ln n)$, for  $\epsilon\in(0,1/2]$, and 
shall be fully specified
in Section~\ref{Sann}.
The total preprocessing time is $O(d n \log n)$.
For each query $q \in \mathbb{R}^d$, the preprocessing phase 
succeeds with constant probability.
The low-quality embedding is extended to
finite subsets of $\ell_2$ with bounded expansion rate $c$ (see Subsection~\ref{SSexpansion} for 
definitions). 
The pointset is now mapped to a space of dimension $O(\log c)$, 
and each query costs roughly $O((c^{\log \log c})d \log n)$.

The second method applies
the same ideas to the augmented decision version of the problem.
This problem is known to be as hard as $\epsilon$-ANN (up to polylogarithmic factors). However, this simplification allows us 
to combine the aforementioned randomized embeddings with simpler data structures in the reduced space. This is the topic of 
Section~\ref{Snear}, and Theorem~\ref{thmbadnear} states that there exists a randomized data structure with linear space and linear preprocessing time which, 
for any query $q\in \RR^d$, reports an $(1+\epsilon)^3$-approximate near neighbor (or a negative answer) in time 
$O(d n^{\rho})$, where $\rho =  1-\Theta({\epsilon^2}/{\log(1/\epsilon)})$. We are able to extend our results to doubling subsets of $\ell_2$ 
(see Subsection~\ref{SDdembed} for definitions) by applying our approach to an $r$-net of the input pointset. 
The resulting data structure has linear space, preprocessing time which depends on the time required to compute an $r$-net, and query time 
$\Big({2}/{\epsilon}\Big)^{O(ddim(X))}$, where $ddim(X)$ is the doubling dimension of $X$.

We also present experiments, based on synthetic 
and image datasets, that validate our approach and our analysis. We implement our low quality embedding method in C++ and present experimental results in
up to $500$ dimensions and $10^6$ points.
One set of inputs, along with the queries, follows the ``planted nearest neighbor model'' which will 
be specified in Section~\ref{Sexperiment}. In another scenario, we assume that the near neighbors of each query point 
follow the Gaussian distribution. We also used the ANN\_SIFT1M~\cite{sift} dataset which contains a collection of $1$ million vectors in $128$ dimensions that represent images. Apart from showing that the embedding has the desired properties in practice, specifically those of Lemma~\ref{lemBBD}, 
we also implement our overall approach for computing $\epsilon$-ANN using the {\tt ANN} library for BBD-trees, and we compare with an LSH implementation, 
namely E2LSH. We show that the theoretical advantages of each method are reflected in practice.

The notation of key quantities is the same throughout the paper.

The paper extends and improves ideas from \cite{AEP15}, except for Section~\ref{Snear} which is entirely new, which
achieves better complexity bounds with a conceptually simpler data structure. 

\paragraph*{Paper organization}
The next section offers a survey of existing techniques. 
Section~\ref{Sembed} introduces our embeddings to dimension lower
than predicted by the Johnson-Linderstrauss Lemma.
Section~\ref{Sann} states our main results about $\epsilon$-ANN search in $\ell_2$ and 
for points with bounded expansion rate. 
Section~\ref{Snear} extends our ideas to the $(\epsilon,R)$-ANN problem in $\ell_2$ and in 
doubling subsets of $\ell_2$. 
Section~\ref{Sexperiment} presents experiments to validate our approach.
We conclude with open questions.

\section{Existing work}\label{Sexisting}

This section details the relevant results that existed prior to this work.

As mentioned above, an exact solution to high-dimensional nearest neighbor search, in sublinear time, 
requires heavy resources. One notable approach to the problem \cite{Mei93} shows that nearest neighbor queries 
can be answered in $O(d^5 \log n)$ time, using $O(n^{d+\delta})$ space, for arbitrary $\delta>0$.

In \cite{AMN+98}, they introduced the
Balanced Box-Decomposition (BBD) trees.
The BBD-trees data structure achieves query time $O(c \log n)$ with
$c \leq d/2 \lceil 1+6d/\epsilon \rceil^d$, using space in $O(dn)$, and
preprocessing time in $O(d n \log n)$.
BBD-trees can be used to retrieve the $k\ge 1$ approximate nearest-neighbors 
at an extra cost of $O(d\log n)$ per neighbor.
BBD-trees have proved to be very practical, as well, and have been
implemented in software library {\tt ANN}.

Another relevant data structure is the Approximate Voronoi Diagrams (AVD).
They are shown to establish a tradeoff between the space complexity
of the data structure and the query time it supports \cite{AMM09}.
With a tradeoff parameter $2 \leq \gamma \leq \frac{1}{\epsilon}$,
the query time is
$O( \log(n\gamma ) + {1} / {(\epsilon\gamma )^{\frac{d-1}{2}}})$
and the space is $O( n \gamma^{d-1} \log \frac{1}{\epsilon})$.
They are implemented on a hierarchical quadtree-based subdivision 
of space into cells, each storing a number of representative points, such that for any query point lying in the cell, at least one 
of the representatives is an approximate nearest neighbor. Further improvements to the space-time trade offs for ANN are obtained in 
\cite{AdFM11}. 

One might apply the Johnson-Lindenstrauss Lemma and
map the points to $O({\epsilon^{-2}}{\log n})$ dimensions with distortion equal to $1+\epsilon$ 
aiming at improving complexity. In particular,
AVD combined with the Johnson-Lindenstrauss Lemma have
query time polynomial in $\log n$, $d$ and $1/\epsilon$ but require
$n^{O({\log ({1}/{\epsilon})}/{\epsilon^2})}$ space, which is prohibitive if $\epsilon\ll 1$. Notice that we 
relate the approximation error with the distortion for simplicity. 
Our approach (Theorem~\ref{thmbadnear}) requires $O(dn)$ space and has
query time 
$O(d n^{\rho})$, where $\rho \approx 1-{\epsilon^2}/{\log(1/\epsilon)}$.

In high dimensional spaces, classic space partitioning data structures
are affected by the curse of dimensionality, as illustrated above. This means that, when
the dimension increases, either the query time or
the required space increases exponentially.
An important method conceived for high dimensional data  
is locality sensitive hashing (LSH). LSH induces a data independent
random partition and is dynamic, since it supports insertions and deletions.
It relies on the existence of locality sensitive hash functions,
which are more likely to map similar objects to the same bucket.
The existence of such functions depends on the metric space. 
In general, LSH requires roughly $O(dn^{1+\rho})$ space and
$O(dn^{\rho})$ query time for some parameter $\rho \in (0,1)$. 
In \cite{AI08} they show that in the Euclidean case, 
one can have $\rho=\frac{1}{(1+\epsilon)^2}$
which matches the lower bound of hashing algorithms proved in \cite{OWZ11}.
Lately, it was shown that it is possible to overcome 
this limitation by switching to a data-dependent scheme  which achieves 
$\rho = \frac{1}{2(1+\epsilon)^2-1}+o(1)$ \cite{AR15}. 
One different approach \cite{Pan06} 
focuses on using near linear space but with query time proportional to 
$dn^{O(1/{(1+\epsilon)})}$ which is sublinear only  when $\epsilon$ is large enough. The query time was later improved  
\cite{AI08}
to $dn^{O(1/{(1+\epsilon)^2})}$ which is also sublinear only for large enough $\epsilon$.
For comparison, 
in Theorem~\ref{thmbadnear} we show that it is possible to use near linear space, 
with query time roughly $O(dn^{\rho})$, where  $\rho \approx 1-{\epsilon^2}/{\log(1/\epsilon)}$, 
achieving sublinear query time even for small values of $\epsilon$. 

Exploiting the structure of the input is an important way to improve
the complexity of ANN.
In particular, significant amount of work has been done for pointsets
with low doubling dimension.
In \cite{HPM05}, they provide an algorithm with
expected preprocessing time $O(2^{\dim(X)} n \log{n})$,
space usage $O(2^{\dim(X)} n)$ and query time
$O(2^{\dim(X)}\log{n}+\epsilon^{-O(\dim(X))})$ for any finite 
metric space $X$ of doubling dimension $\dim(X)$.
In \cite{IN07} they provide randomized embeddings that preserve
nearest neighbor with constant probability, for points lying on
low doubling dimension manifolds in Euclidean settings. Naturally, 
such an approach can be easily combined with any 
known data structure for $\epsilon$-ANN.

In \cite{DF08} they present random projection trees 
which adapt to pointsets of low doubling dimension. 
Like kd-trees, every split partitions the pointset into
subsets of roughly equal cardinality.
Unlike kd-trees, the space is split with respect to
a random direction, not necessarily parallel to the coordinate axes.
Classic $kd$-trees also adapt to the doubling dimension of 
randomly rotated data \cite{Vem12}.
However, for both techniques, no related theoretical arguments 
about the efficiency of $\epsilon$-ANN search were given.

In \cite{KR02}, they introduce a different notion of intrinsic dimension
for an arbitrary metric space, namely its expansion rate $c$;
it is formally defined in Subsection~\ref{SSexpansion}.
The doubling dimension is a more general notion of intrinsic dimension
in the sense that, when a finite metric space has 
bounded expansion rate, then it also has bounded doubling dimension,
but the converse does not hold \cite{GKL03}. Several efficient 
solutions are known for metrics with bounded expansion rate, including
for the problem of exact nearest neighbor. 
In \cite{KL04}, they present 
a data structure which requires $c^{O(1)}n$ space and answers queries
in $c^{O(1)}\ln n$. Cover Trees \cite{BKL06} require 
$O(n)$ space and each query costs $O(c^{12} \log n)$ time for exact 
nearest neighbors. In Theorem~\ref{TannExp}, 
we provide a data structure for the $\epsilon$-ANN 
problem with linear space and roughly $ O((c^{\log (\log c/\epsilon)})d \cdot \log n) )$ query time. 
The result concerns pointsets in $d$-dimensional Euclidean space.

\section{Low Quality Randomized Embeddings}\label{Sembed}

This section examines standard dimensionality reduction techniques and
extends them to approximate embeddings optimized to our setting. In the following, 
we denote by $\|\cdot \|$ the Euclidean norm and by $|\cdot|$ the cardinality of a set.

 An embedding is {\em oblivious} when it can be 
computed for any point of a dataset or query set, without knowledge of any other point in these 
sets. 

In \cite{ABetal05}, they consider non-oblivious embeddings from finite metric spaces with small dimension and 
distortion, while allowing a constant fraction of all distances to be arbitrarily distorted. In \cite{BRS11}, they present non-oblivious 
embeddings for the $\ell_2$ case, which preserve distances in local neighborhoods. 
In \cite{GK15}, they provide a non-oblivious embedding which preserves distances up to a given scale 
and the target dimension mainly depends on $ddim(X)$ with no dependence on $|X|$. In general, embeddings based on 
probabilistic partitions are not oblivious. In \cite{BG15a}, they solve ANN in 
$\ell_p$ spaces, for $2<p<\infty$, by oblivious embeddings to $\ell_{\infty}$ or $\ell_2$. 

But, it is not obvious how to use a non-oblivious embedding in the scenario in which we preprocess a dataset and we expect a 
query to arrive. Therefore we focus on oblivious embeddings.

Let us now revisit the classic Johnson-Lindenstrauss Lemma:  

\begin{prp} {\rm\cite{JL84}}
For any set $X \subset \mathbb{R}^d$, $\epsilon \in (0,1)$ there exists a distribution over linear mappings 
$f:\RR^d \longrightarrow \RR^{d'}$, where $d'=O(\log |X|/\epsilon^2)$, such that for any $p,q \in X$, 
$$
(1-\epsilon) \| p-q\|^2 \leq \|f(p)-f(q)\|^2 \leq (1+\epsilon) \|p-q\|^2.
$$
\end{prp}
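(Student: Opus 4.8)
The plan is to exhibit an explicit random linear map and reduce the statement to a one-dimensional concentration estimate followed by a union bound. First I would construct the mapping as $f(x) = \frac{1}{\sqrt{d'}} G x$, where $G$ is a $d' \times d$ matrix whose entries are independent standard Gaussians $N(0,1)$; this is the distribution over linear maps whose existence the proposition asserts. By linearity, $\|f(p)-f(q)\| = \|f(p-q)\|$, so fixing a pair $p,q$ and writing $v = p-q$, the entire argument reduces to understanding the random variable $\|f(v)\|^2$ for a single fixed vector $v$.

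The heart of the proof is a single-vector concentration estimate. For fixed $v$, each coordinate $(Gv)_i$ is Gaussian with mean $0$ and variance $\|v\|^2$, and the coordinates are independent, so $\|f(v)\|^2 = \frac{1}{d'}\sum_{i=1}^{d'} (Gv)_i^2$ equals $\frac{\|v\|^2}{d'}$ times a chi-squared variable with $d'$ degrees of freedom. In particular $\mathbb{E}\|f(v)\|^2 = \|v\|^2$, so the map is correct in expectation; what remains is to show it concentrates. I would establish, via the moment generating function of the chi-squared distribution and a Chernoff-type bound, the two-sided tail estimate
$$\Pr\Big[\,\big|\,\|f(v)\|^2 - \|v\|^2\,\big| > \epsilon \|v\|^2\,\Big] \leq 2\exp\!\big(-c\,\epsilon^2 d'\big),$$
for an absolute constant $c > 0$, where the hypothesis $\epsilon \in (0,1)$ is used to simplify the exponent after expanding $\log(1\pm\epsilon)$ to second order.

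The final step is a union bound over all pairs. Applying the tail estimate to each of the at most $\binom{|X|}{2} < |X|^2$ difference vectors $v = p-q$ and summing failure probabilities, the probability that some pair violates the desired inequality is at most $|X|^2 \cdot 2\exp(-c\epsilon^2 d')$. Choosing $d' = O(\log|X|/\epsilon^2)$ with a sufficiently large hidden constant forces this quantity strictly below $1$, so a map with the required property exists (and in fact occurs with high probability), which is exactly the claimed bound on $d'$.

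I expect the main obstacle to be the concentration inequality itself: deriving the exponential tail bound for the chi-squared distribution demands a careful Chernoff argument in which the upper and lower tails are treated separately and the $\epsilon^2$ dependence is extracted through a second-order Taylor expansion of the log-moment-generating function. The reduction to a single vector and the concluding union bound are then routine.
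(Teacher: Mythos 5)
Your proposal is a correct and complete outline of the standard proof; note that the paper itself states this proposition as a cited result from \cite{JL84} without proof, and the route you take (Gaussian matrix, reduction to a chi-squared tail bound, union bound over pairs) is exactly the one the paper attributes to \cite{IM98} and \cite{DG02}. The two-sided concentration estimate you identify as the heart of the argument is precisely what the paper records as Lemma~\ref{lemDG}, so your approach is fully consistent with the machinery the paper relies on.
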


In the initial proof \cite{JL84}, they show that this can 
be achieved by orthogonally projecting the pointset on a 
random linear subspace of dimension $d'$. In \cite{DG02}, they provide 
a proof based on elementary probabilistic techniques, see also Lemma~\ref{lemDG}.
In \cite{IM98}, they prove that it suffices to apply a gaussian matrix $G$ 
on the pointset. $G$ is a $d \times d'$ matrix with each of its entries 
independent random variables given by the standard normal distribution $N(0,1)$. 
Instead of a gaussian matrix, we can even apply a matrix whose entries 
are independent random variables with uniformly distributed values in $\{-1,1\}$ \cite{Ach03}. 

However, it has been realized that this notion 
of randomized embedding is stronger than what is required for ANN searching. 
The following definition has been introduced in~\cite{IN07} and focuses only on the distortion of the nearest neighbor.

\begin{dfn}  \label{Dnnpres}
Let $(Y,d_Y)$, $(Z,d_Z)$ be metric spaces and $X \subseteq Y$.
A distribution over mappings $f:Y \rightarrow Z$ is
a {\em nearest-neighbor preserving embedding} with distortion $D \geq 1$
and probability of correctness $P \in [0,1]$ if,
$\forall \epsilon \geq 0$ and $\forall q\in Y$, with probability at least $P$, 
when $x\in X$ is such that $f(x)$ is an $\epsilon$-ANN 
of $f(q)$ in $f(X)$, then $x$ is a $(D \cdot (1+\epsilon))$-approximate nearest neighbor
of $q$ in $X$. 
\end{dfn}

Let us now consider a closely related problem.
While in the ANN problem we search one point which is approximately nearest,
in the $k$ approximate nearest neighbors problem ($\epsilon$-$k$ANNs) we seek 
an approximation of the $k$ nearest points, in the following sense.
 Let $X$ be a set of $n$ points in $\mathbb{R}^d$, let $q\in \RR^d$ and $1\leq k \leq n$.
The problem consists in reporting a sequence $S=\{p_1,\dots,p_k\}$ of $k$ distinct points such that the $i$-th point $p_i$
is an $(1+\epsilon)$-approximation to the $i$-th nearest neighbor of $q$. Furthermore, the following assumption 
is satisfied by the search routine of certain tree-based data structures, such as BBD-trees. 

\begin{ass}\label{assbbd}
 Let $S'\subseteq X$ be the 
set of points visited by the $\epsilon$-$k$ANNs search such that 
$S=\{p_1,\dots,p_k\} \subseteq S'$ is the set of points which are the $k$ nearest points
to the query point $q$ among the points in $S'$.
Moreover, $S$ is ordered w.r.t.\ distance from $q$, hence $p_k$ is farthest.
We assume that 
$\forall x \in X \setminus S'$, $d(x,q) > d(p_k,q)/(1+\epsilon)$. 
\end{ass}

Assuming the existence of a data structure which solves $\epsilon$-$k$ANNs and satisfies Assumption~\ref{assbbd}, 
we propose to weaken Definition~\ref{Dnnpres} as in the following definition. 

\begin{dfn} \label{Dlocpres}
Let $(Y,d_Y)$, $(Z,d_Z)$ be metric spaces and $X \subseteq Y$.
A distribution over mappings 
$f:Y \rightarrow Z$ is a {\em locality preserving embedding}
with distortion $D \geq 1$, probability of correctness 
$P \in [0,1]$ and locality parameter $k$ if, $\forall \epsilon \geq 0$ and
$\forall q\in Y$, with probability at least $P$, 
when $S=\{f(p_1),\dots,f(p_k)\}$ is a solution to $\epsilon$-$k$ANNs for $q$
under Assumption~\ref{assbbd}, then there exists $f(x) \in S$ such that $x$ is a $(D\cdot (1+\epsilon))$-approximate nearest neighbor of $q$ in $X$.
\end{dfn}

According to this definition we can reduce the problem of $\epsilon$-ANN in
dimension $d$ to the problem of computing $k$ approximate nearest neighbors
in dimension $d'<d$.

We employ the Johnson-Lindenstrauss dimensionality reduction technique and, more specifically, the proof in \cite{DG02}. 

\begin{rmk}
 In the statements of our results, we use the term $(1+\epsilon)^2$ or $(1+\epsilon)^3$ for the sake of simplicity. 
 Notice that we can replace $(1+\epsilon')^2$ by $1+\epsilon$ just by 
 rescaling $\epsilon' \gets \epsilon/4\implies (1+\epsilon')^2\leq 1+\epsilon$.
\end{rmk}

\begin{lem} {\rm\cite{DG02}} \label{lemDG}
There exists a distribution over linear maps
$A:\mathbb{R}^d \rightarrow \mathbb{R}^{d'}$ s.t., for any $p \in \mathbb{R}^d$ with $\|p\|=1$:
\begin{itemize}
 \item if $\beta^2<1$ then $\mathrm{Pr}[\|Ap\|^2 \leq \beta^2 \cdot \frac {d'}{d}] \leq exp(\frac{d'}{2}(1-\beta^2+2 \ln \beta)),$
 \item if $\beta^2>1$ then $\mathrm{Pr}[\|Ap\|^2 \geq \beta^2 \cdot \frac {d'}{d}] \leq exp(\frac{d'}{2}(1-\beta^2+2 \ln \beta)).$
 \end{itemize}
\end{lem}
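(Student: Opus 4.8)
The plan is to realize the distribution concretely and then reduce both tail bounds to a single Chernoff estimate for a chi-squared variable. First I would take $A = \frac{1}{\sqrt d}\,G$, where $G$ is a $d' \times d$ matrix with i.i.d.\ $N(0,1)$ entries (this is the construction of \cite{IM98}; the random orthogonal-projection construction of \cite{DG02} works equally well, but the Gaussian one makes the reduction cleaner). For a fixed unit vector $p$, each coordinate $(Gp)_i = \sum_j G_{ij}\,p_j$ is a centered Gaussian of variance $\|p\|^2 = 1$, and the rows of $G$ are independent, so the $(Gp)_i$ are i.i.d.\ $N(0,1)$. Hence $Q := d\,\|Ap\|^2 = \sum_{i=1}^{d'} (Gp)_i^2$ is exactly a $\chi^2_{d'}$ random variable, and the two events in the statement become $Q \le \beta^2 d'$ and $Q \ge \beta^2 d'$ respectively, after multiplying through by $d$.

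Next I would apply the exponential Markov inequality using the moment generating function of the chi-squared law, $\mathrm{E}[e^{sQ}] = (1-2s)^{-d'/2}$ for $s<1/2$. For the lower tail ($\beta<1$) I would write, for any $t>0$, $\mathrm{Pr}[Q \le \beta^2 d'] = \mathrm{Pr}[e^{-tQ} \ge e^{-t\beta^2 d'}] \le e^{t\beta^2 d'}(1+2t)^{-d'/2}$, and for the upper tail ($\beta>1$), for any $0<t<1/2$, $\mathrm{Pr}[Q \ge \beta^2 d'] \le e^{-t\beta^2 d'}(1-2t)^{-d'/2}$.

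The one computational step that actually has to be carried out is the optimization over $t$. Differentiating the exponent shows that the optimal choice satisfies $1+2t = 1/\beta^2$ in the lower-tail case and $1-2t = 1/\beta^2$ in the upper-tail case; in both cases $t$ has the correct sign and remains in the admissible range precisely because $\beta<1$ (resp.\ $\beta>1$). Substituting back, the linear term contributes $\frac{d'}{2}(1-\beta^2)$ and the logarithmic term contributes $d'\ln\beta$ in both cases, and these combine to give exactly $\exp(\frac{d'}{2}(1-\beta^2+2\ln\beta))$, as claimed.

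I do not expect a genuine obstacle: this is the classical Johnson--Lindenstrauss concentration estimate and the argument is a textbook Laplace-transform bound. The only place that needs care is bookkeeping the sign of $t$ and verifying that it stays in $(0,1/2)$ so that the relevant moment generating function is finite, and this is automatic once the hypotheses $\beta<1$ and $\beta>1$ are used. If one instead follows \cite{DG02} verbatim through a random projection, the mild extra subtlety is that $\|Ap\|^2$ then appears as a ratio $\sum_{i\le d'} X_i^2 / \sum_{i\le d} X_i^2$ of \emph{dependent} sums, so the normalization must be handled by conditioning or by a comparison argument rather than a direct product of independent moment generating functions; the Gaussian-matrix route sidesteps this entirely.
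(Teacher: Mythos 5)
Your proof is correct. Note that the paper itself offers no proof of this lemma --- it is imported verbatim from \cite{DG02} --- so there is no internal argument to compare against; the relevant comparison is with the cited source. Your computation is the standard Laplace-transform (Chernoff) bound for a $\chi^2_{d'}$ variable: the events translate correctly to $Q\le\beta^2 d'$ and $Q\ge\beta^2 d'$, the optimal parameters $1+2t=1/\beta^2$ and $1-2t=1/\beta^2$ are admissible exactly under the stated hypotheses on $\beta$, and the substitution does yield $\exp(\frac{d'}{2}(1-\beta^2+2\ln\beta))$ in both cases. The one substantive divergence from \cite{DG02} is the choice of distribution: Dasgupta and Gupta project onto a random $d'$-dimensional subspace, so their statistic is the ratio $\sum_{i\le d'}X_i^2/\sum_{i\le d}X_i^2$ of dependent sums, and their moment-generating-function computation has to handle the joint law of the two sums; your i.i.d.\ Gaussian matrix $A=\frac{1}{\sqrt d}G$ makes $d\,\|Ap\|^2$ exactly $\chi^2_{d'}$ and avoids that complication while producing the identical tail bound with the identical normalization $\mathbb{E}\|Ap\|^2=d'/d$. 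You correctly flag this yourself, so nothing is missing; either realization of the distribution satisfies the lemma as stated, and nothing downstream in the paper (Theorems~\ref{thmbad} and~\ref{thmbad2} only invoke the two displayed inequalities) depends on which one is used.
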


Now, a simple calculation shows the following.
\begin{cor}
 \label{corDG}
 If $\beta^2<1$ then $\mathrm{Pr}[\|Ap\|^2 \leq \beta^2 \cdot \frac {d'}{d}] \leq (e\beta^2)^{d'/2}$.
\end{cor}
\begin{proof}
By Lemma~\ref{lemDG},
 $$\mathrm{Pr}[ \|Ap\|^2 \leq \beta^2 \cdot \frac {d'}{d} ] \leq ({e^{-\beta^2+1+2 \ln \beta}})^{d'/2} \leq ({e^{1+2 \ln \beta}})^{d'/2}.
 $$
\end{proof}

The following inequality shall be useful.

\begin{lem} \label{LemIneq}
 For all $i\in \mathbb{N}$, $\epsilon \in (0,1/2]$, the following holds:
$$
\frac{1}{(1+\epsilon)^2} -
2 \ln \frac{1}{1+\epsilon}-1 >  \epsilon^2.
$$
\end{lem}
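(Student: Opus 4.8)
The plan is to reduce the two-variable inequality to a one-variable one. Setting
\[
x = \frac{1+\epsilon/2}{2^i(1+\epsilon)},
\]
the left-hand side becomes $g(x) := x^2 - 2\ln x - 1$, so the claim reads $g(x) > 0.05(i+1)\epsilon^2$. First I would record the elementary behaviour of $g$: since $g'(x) = 2(x^2-1)/x$, the function is strictly decreasing on $(0,1)$, strictly increasing on $(1,\infty)$, and attains its unique minimum $g(1)=0$ at $x=1$. For every $i\in\mathbb{N}$ and $\epsilon\in(0,1)$ we have $1+\epsilon/2 < 1+\epsilon$ and $2^i\ge 1$, hence $x\in(0,1)$ and $g(x)>0$; the work is to make this positivity quantitative, and the difficulty depends on how close $x$ is to $1$.

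I would then split according to the size of $i$. For $i\ge 1$ the argument $x$ is bounded away from $1$ and a crude estimate suffices. Writing
\[
-2\ln x = 2i\ln 2 + 2\ln\frac{1+\epsilon}{1+\epsilon/2} \ge 2i\ln 2,
\]
since the last logarithm is nonnegative, and discarding the nonnegative term $x^2$, we obtain $g(x)\ge 2i\ln 2 - 1$. It then remains to verify the purely numerical inequality $2i\ln 2 - 1 > 0.05(i+1)$ for $i\ge 1$, which holds because the left side grows like $1.386\,i$ while the right grows like $0.05\,i$; as $\epsilon<1$ gives $0.05(i+1)\epsilon^2 < 0.05(i+1)$, this settles the case.

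The delicate case, and the main obstacle, is $i=0$, where $x=(1+\epsilon/2)/(1+\epsilon)$ approaches $1$ as $\epsilon\to 0$, exactly where $g$ vanishes; there no constant lower bound can work, and I must instead capture the quadratic vanishing of $g$ at its minimum. I would write $x = 1-t$ with $t = \frac{\epsilon}{2(1+\epsilon)}\in(0,\tfrac{1}{4})$ and expand
\[
g(1-t) = -2t + t^2 - 2\ln(1-t) = 2t^2 + \tfrac{2}{3}t^3 + \cdots > 2t^2,
\]
using the Taylor series of $\ln(1-t)$, all of whose discarded terms are positive. Substituting $t$ back yields $g(x) > 2t^2 = \frac{\epsilon^2}{2(1+\epsilon)^2}$, and since $(1+\epsilon)^2 < 4$ on $(0,1)$ this exceeds $\tfrac{1}{8}\epsilon^2 > 0.05\epsilon^2$, which is precisely the $i=0$ instance. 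Combining the two cases proves the lemma. The only genuinely quantitative step is this second-order expansion near $x=1$; everything else is a matter of bounding the logarithm below by its leading $2i\ln 2$ term.
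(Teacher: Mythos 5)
Your proof is correct, and it takes a genuinely different route from the paper's. The paper argues by induction on $i$: the base case $i=0$ is merely asserted (``it can be checked that \dots''), and the induction step exploits the fact that passing from $i$ to $i+1$ replaces $x=\frac{1+\epsilon/2}{2^i(1+\epsilon)}$ by $x/2$, which increases the left-hand side by $2\ln 2-\frac{3}{4}x^2 \ge 2\ln 2-\frac34 > 0.05 \ge 0.05\,\epsilon^2$, i.e.\ by at least the increment of the right-hand side. You instead split directly on $i$: for $i\ge 1$ the single term $2i\ln 2-1$ already dominates $0.05(i+1)$, with no induction needed, and for $i=0$ you supply the quantitative content via the expansion $g(1-t)=2t^2+\frac{2}{3}t^3+\cdots>2t^2$ with $t=\frac{\epsilon}{2(1+\epsilon)}$, yielding $g(x)>\frac{\epsilon^2}{2(1+\epsilon)^2}>\frac{\epsilon^2}{8}$. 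Both arguments ultimately rest on the same two facts --- $g(x)=x^2-2\ln x-1$ vanishes only to second order at $x=1$, and each halving of $x$ contributes $2\ln 2$ --- but your version has two advantages: it actually proves the $i=0$ case, which is the only delicate one and which the paper leaves unverified, and it replaces the induction step (whose displayed chain of inequalities in the paper is rather hard to parse) by a one-line numerical bound. The checks all go through: $t\in(0,\frac14)$, all discarded Taylor terms are positive, $(1+\epsilon)^2<4$, and $2\ln 2-1\approx 0.386>0.1=0.05\cdot 2$ with the margin only growing in $i$.
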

\begin{proof}
 Let $f(\epsilon)=\frac{1}{(1+\epsilon)^2} -
2 \ln \frac{1}{(1+\epsilon)}-1 -  \epsilon^2$, which is continuous in $(0,1/2]$. 
It suffices to show that $f(\epsilon)>0$, for $\epsilon \in (0,1/2]$. 
Then we examine its derivative:
$$
f'(\epsilon)=2 \epsilon \Big( \frac{2+\epsilon}{(1+\epsilon)^3}-1\Big).
$$
Since $\epsilon>0$, we need to examine $\frac{2+\epsilon}{(1+\epsilon)^3}-1$. We have,
$$\frac{2+\epsilon}{(1+\epsilon)^3}-1 \geq 0  \iff {2+\epsilon}\geq{(1+\epsilon)^3} \iff \epsilon^3+3\epsilon^2+2\epsilon-1 \leq 0 .
$$
The last inequality holds when $\epsilon \leq z$, where $z \approx 0.32$. while $f'(\epsilon)<0$ for $\epsilon \in (z,1/2]$. 
Hence, $f(\epsilon)$ is an increasing function when $\epsilon \in (0,z]$ and decreasing in $[z,1/2]$. 
Now, in the interval $\epsilon \in (0,z]$ we obtain $f(\epsilon)>f(0)=0$ and in 
$\epsilon \in [z,1/2]$ we obtain $f(\epsilon)\geq f(1/2)>0.005$.

\end{proof}

We are now ready to prove the main theorem of this section.

\begin{thm} \label{thmbad}
Under the notation of Definition~\ref{Dlocpres},
there exists a randomized mapping $f:\mathbb{R}^d \rightarrow \mathbb{R}^{d'}$
which satisfies Definition~\ref{Dlocpres} for
$d'=O({\log \frac{n}{ k}} / {\epsilon^2})$, $\epsilon\in(0,1/2]$, 
distortion $D=(1+\epsilon)^2$ and probability of success $2/3$.
\end{thm}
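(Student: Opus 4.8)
The plan is to instantiate $f$ as the linear map $A$ of Lemma~\ref{lemDG}, rescaled by $\sqrt{d/d'}$ so that $\mathbb{E}\|f(p)\|^2=\|p\|^2$ and the two bullets read $\Pr[\|f(p)\|\le\beta\|p\|]\le\exp(\frac{d'}{2}(1-\beta^2+2\ln\beta))$ for $\beta<1$, symmetrically for $\beta>1$. I fix one query $q$ at a time (the definition quantifies over $q$), let $u$ be its exact nearest neighbour, set $r=\|q-u\|$, and fix $c\ge1$. Call a point $x\in X$ \emph{far} if $\|q-x\|>(1+\epsilon)c\,r$, i.e.\ if it fails to be a $(1+\epsilon)c$-approximate nearest neighbour. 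The goal is to show that, with probability at least $1-\delta$, the $\epsilon$-$k$ANNs search returns at least one point that is not far, which is exactly Definition~\ref{Dlocpres} with $D=1+\epsilon$.

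First I would isolate a single upper-bound (anti-expansion) event for the true neighbour, $E_0=\{\|f(q)-f(u)\|\le(1+\epsilon/2)r\}$. Since $1+\epsilon/2>1$, the second bullet of Lemma~\ref{lemDG} with $\beta=1+\epsilon/2$ bounds $\Pr[E_0^{c}]$ by $\exp(\frac{d'}{2}(1-(1+\epsilon/2)^2+2\ln(1+\epsilon/2)))$. To see the exponent is usefully negative I feed the bracket through inequality~\eqref{genineq} with $x=\epsilon/2$ and then through Lemma~\ref{LemIneq} at $i=0$, which together give $(1+\epsilon/2)^2-2\ln(1+\epsilon/2)-1>0.05\,\epsilon^2$, hence $\Pr[E_0^{c}]\le\exp(-\tfrac{d'}{2}\,0.05\,\epsilon^2)$. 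This is precisely what the two auxiliary lemmas are calibrated for.

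Next I would control the number $N$ of far points that contract too much. I partition the far points into dyadic shells $X_i=\{x:\|q-x\|\in[2^i(1+\epsilon)c\,r,\ 2^{i+1}(1+\epsilon)c\,r)\}$, $i\ge0$. For $x\in X_i$ the anti-contraction event ``$\|f(x)-f(q)\|\le(1+\epsilon/2)r$'' has ratio $\beta_i=\frac{(1+\epsilon/2)r}{2^i(1+\epsilon)c\,r}\le\frac{1+\epsilon/2}{2^i(1+\epsilon)}<1$, so the first bullet of Lemma~\ref{lemDG} together with Lemma~\ref{LemIneq} bounds its probability by $\exp(-\tfrac{d'}{2}\,0.05\,(i+1)\epsilon^2)$. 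Summing over the at most $n$ points in each shell and over $i\ge0$ (a geometric series dominated by its first term) yields $\mathbb{E}[N]\le 2n\exp(-\tfrac{d'}{2}\,0.05\,\epsilon^2)$, and Markov gives $\Pr[N\ge k]\le\mathbb{E}[N]/k$. Choosing $d'=O(\epsilon^{-2}\log\frac{n}{\delta k})$ makes both $\Pr[E_0^{c}]$ and $\Pr[N\ge k]$ at most $\delta/2$, so by a union bound $E_0\cap\{N<k\}$ holds with probability at least $1-\delta$; this also pins down the claimed $d'$.

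The main obstacle is the final combinatorial step: showing that $E_0\cap\{N<k\}$ forces the search to return a non-far point. Assume all $k$ returned points are far and split on whether $u$ was visited. If $u\in S'$, then $u\notin S$ (every element of $S$ is far), so since $S$ is the $k$ nearest points of $S'$ we get $\|f(q)-f(p_k)\|\le\|f(q)-f(u)\|\le(1+\epsilon/2)r$; if $u\notin S'$, Assumption~\ref{assbbd} gives $\|f(q)-f(p_k)\|<(1+\epsilon)\|f(q)-f(u)\|\le(1+\epsilon)(1+\epsilon/2)r$. In either case all $k$ returned far points lie within the radius that $N$ counts, forcing $N\ge k$ and contradicting $N<k$; hence some returned point is a $(1+\epsilon)c$-approximate nearest neighbour. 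The delicate point is exactly this coupling: the unvisited case of Assumption~\ref{assbbd} injects an extra factor $1+\epsilon$ into the contraction radius, and making that radius, the shell spacing, and the target approximation line up with the precise exponent that Lemma~\ref{LemIneq} was built to deliver (and in particular keeping $\beta_0<1$ for the innermost shell) is where the real work lies; everything else is concentration plus a union bound.
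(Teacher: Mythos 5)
Your architecture is essentially the paper's: the same rescaled map, the same two failure events (expansion of the true neighbour $u$, and too many far points contracting below a threshold), Markov plus a union bound, and the same three-way case analysis on whether $f(u)$ lies in $S$, in $S'\setminus S$, or outside $S'$. One stylistic difference: the dyadic-shell decomposition you use to bound $\mathbb{E}[N]$ is not what the paper does here. Since every far point is at distance greater than $\gamma r$ with $\gamma=c(1+\epsilon)$, a single application of the first bullet of Lemma~\ref{lemDG} with ratio $\beta/\gamma=\frac{1+\epsilon/2}{1+\epsilon}$, combined with Lemma~\ref{LemIneq} at $i=0$ only, already gives $\mathbb{E}[N]\le n\exp(-\frac{d'}{2}0.05\epsilon^2)$. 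The shells and the full strength of Lemma~\ref{LemIneq} for all $i$ are reserved for Theorem~\ref{thmbad2}, where the $(\rho,c)$-expansion rate bounds $|B_q(r_i)|$ and the geometric growth of the shells must be beaten. Your version is correct and costs only a factor $2$, but it is machinery imported from the other theorem.

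The genuine gap is the one you flag but do not resolve: the contraction radius defining $N$ does not match the radius needed in the final combinatorial step. You count far points with $\|f(x)-f(q)\|\le(1+\epsilon/2)r$, but when $f(u)\notin S'$, Assumption~\ref{assbbd} only yields $\|f(p_k)-f(q)\|< c\,\|f(u)-f(q)\|\le c(1+\epsilon/2)r$ (the factor is the approximation ratio $c$ of the assumed $k$ANNs structure, which solves $(c-1)$-$k$ANNs, not $1+\epsilon$), so the $k$ returned far points need not be among those that $N$ counts and the contradiction $N\ge k$ does not follow. The paper's calibration is to scale the contraction threshold by $c$ as well: a far point, meaning $\|x-q\|>\gamma r$ with $\gamma=c(1+\epsilon)$, is ``bad'' when $\|f(x)-f(q)\|\le\beta r$ with $\beta=c(1+\epsilon/2)$. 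Then both branches of the case analysis place all $k$ returned points within radius $\beta r$, while the ratio governing the concentration bound is $\beta/\gamma=\frac{1+\epsilon/2}{1+\epsilon}<1$ independently of $c$, which is exactly what Lemma~\ref{LemIneq} at $i=0$ delivers; the distortion charged to the output is then $\gamma=c(1+\epsilon)=D\cdot c$ with $D=1+\epsilon$, as Definition~\ref{Dlocpres} requires. Without this recalibration your innermost ratio would be $\frac{1+\epsilon/2}{c}$ (after correcting the assumption's factor), which exceeds $1$ for $c$ near $1$, so the first bullet of Lemma~\ref{lemDG} is not even applicable. With it, your argument closes and coincides with the paper's.
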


\begin{proof} 
Let $X$ be a set of $n$ points in $\mathbb{R}^d$ and consider map
$$
f:\mathbb{R}^d \rightarrow \mathbb{R}^{d'} : v\mapsto \sqrt{d/{d'}}\cdot A\ v,
$$
where $A$ is a matrix chosen from a distribution as in Lemma~\ref{lemDG}.
Without loss of generality the query point $q$ lies at the origin and its 
nearest neighbor $u$ lies at distance $1$ from $q$. We denote by $c\geq 1$ 
the approximation ratio guaranteed by the assumed data structure (see Assumption~\ref{assbbd}). That is, the assumed data structure 
solves the $(c-1)$-$k$ANNs problem. 
Let $N$ be the random variable whose value indicates the number
of ``bad'' candidates, that is 
$$
N= |\, \{ x \in X \, :\, \|x\|>\gamma \, \wedge\,
	{\|f(x)\|} \leq \beta \}\, |,
$$ 
where we define $\beta=c(1+\epsilon)$, $\gamma=c(1+\epsilon)^2$.
Hence, by Lemma~\ref{lemDG} and Lemma~\ref{LemIneq}, 
$$
\mathbb{E}[N]\leq n \cdot exp(\frac{d'}{2}(1-\frac{\beta^2}{\gamma^2}+2 \ln \frac{\beta}{\gamma}))\leq {n} \cdot  exp(-d' \cdot \epsilon^2/2).
$$ 
The event of failure is defined as the disjunction of two events:
\begin{equation} \label{Eproba}
 N\geq k\,  \; \vee \; \, \|f(u)\| \geq (\beta/c) ,
\end{equation}
and its probability is at most equal to
$$
\mathrm{Pr}[N \geq k] + exp(\frac{d'}{2}(1-(\beta/c)^2+2 \ln (\beta/c))) ,
$$
by applying again Lemma~\ref{lemDG}.  Now, we set $d'\geq 2 \ln (\frac{6n}{k})/\epsilon^2$ and we bound these two terms. \if 0For the first one,
we choose $d'$ such that \fi%
By Markov's inequality,
$$
\mathrm{Pr}[N\geq k] \leq \frac{n}{k} \cdot  exp(-d' \cdot \epsilon^2/2) \leq\frac{1}{6}.
$$
In addition,
$$
exp(\frac{d'}{2}(1-(\beta/c)^2+2 \ln (\beta/c))){\leq}\Big(\frac{k}{6n}\Big)^{2-2\epsilon/3} <\frac{1}{6}.
$$
Hence, there exists $d'$ such that
$$
d'=O({\log \frac{n}{ k}} / {\epsilon^2})
$$
and with probability at least $2/3$, the following two events occur:
$$ \|f(q)-f(u)\|\leq (1+\epsilon) \|u-q\|,$$
$$ | \{ p \in X | \|p-q\| > c(1+\epsilon)^2 \|u-q\| \implies \|f(q)-f(p)\|\leq c(1+{\epsilon}) \|u-q\| \}| <k.$$

Let us consider the case when the random experiment succeeds, and 
let $S=\{f(p_1),\dots ,f(p_k)\}$ be a solution of the $(c-1)$-$k$ANNs problem in the projected space, 
given by a data-structure which satisfies Assumption~\ref{assbbd}. 
It holds that $\forall f(x) \in f(X) \setminus S'$, $\|f(x)-f(q)\|> \|f(p_k)-f(q)\|/c$, 
where $S'$ is the set of all points visited by the search routine.  

If $f(u) \in S$, then $S$ contains the projection of the nearest neighbor. If $f(u) \notin S$, then if $f(u) \notin S'$ 
we have the following:
$$
\|f(u)-f(q)\|>\|f(p_k)-f(q)\|/c \implies \|f(p_k)-f(q)\| < c(1+\epsilon)\|u-q\|,
$$
which means that there exists at least one point $f(p^{*})\in S$ s.t.\ $\|q-p^{*}\| \leq c(1+\epsilon)\|u-q\|$. Finally, if 
$f(u) \notin S$ but $f(u) \in S'$ then 
$$\|f(p_k)-f(q)\| \leq \|f(u)-f(q)\| \implies  \|f(p_k)-f(q)\| \leq (1+\epsilon) \|u-q\|,$$
which means that there exists at least one point $f(p^{*})\in S$ s.t. $\|q-p^{*}\| \leq c(1+\epsilon)^2\|u-q\|$. 

Hence, $f$ satisfies Definition~\ref{Dlocpres} for $D=(1+\epsilon)^2$ and the theorem is established. 
\end{proof}

\section{Approximate Nearest Neighbor Search}\label{Sann}

This section combines tree-based data structures which solve $\epsilon$-$k$ANNs with the results of Section~\ref{Sembed},
in order to obtain an efficient randomized data structure which solves $\epsilon$-ANN.

\subsection{Finite subsets of $\ell_2$}\label{SSannEu}
This subsection examines the general case of finite subsets of $\ell_2$. 

BBD-trees \cite{AMN+98} require $O(dn)$ space, and allow computing $k$
points, which are $(1+\epsilon)$-approximate nearest neighbors, in
time $O((\lceil 1+6\frac{d}{\epsilon}\rceil^d+k) d \log n)$.
The preprocessing time is $O(dn \log n)$. Notice, that BBD-trees satisfy 
Assumption~\ref{assbbd}. 

The algorithm for the $\epsilon$-$k$ANNs 
search visits cells in increasing order with respect to their distance from the query point $q$. If the current cell 
lies at distance more than $r_k/c$, where $r_k$ is the current distance to the $k$th nearest neighbor, the search terminates. 
We apply the random projection for distortion $D=1+\epsilon$, thus
relating approximation error to the allowed distortion; this is not 
required but simplifies the analysis.

Moreover, $k=n^{\rho}$; the formula for $\rho<1$ is determined below.
Our analysis then focuses on the asymptotic behavior of the term 
$O( \lceil 1+6\frac{d'}{\epsilon}\rceil^{d'} + k )$.  

\begin{lem} \label{lemBBD}
With the above notation, there exists $k>0$ s.t., for fixed
$\epsilon \in (0,1)$, it holds that
$\lceil 1+6\frac{d'}{\epsilon}\rceil^{d'}+k =O(n^{\rho})$, where 
$\rho = 1-\Theta({\epsilon^2}/{(\epsilon^2+\log(\max\{\frac{1}{\epsilon},\log n\}))})<1$.
\end{lem}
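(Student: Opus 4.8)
The plan is to choose the locality parameter as $k=n^{\rho}$ and then fix the exponent $\rho$ self-consistently. Once $k=n^{\rho}$, Theorem~\ref{thmbad} pins the target dimension at $d'=O(\log\frac{n}{\delta k}/\epsilon^2)$; treating $\delta$ as a constant and using $n/(\delta k)=n^{1-\rho}/\delta$, this reads $d'\le C_0(1-\rho)\ln n/\epsilon^2$ for a suitable constant $C_0$ and all large $n$ (one checks that $(1-\rho)\ln n$ dominates the additive $\ln(1/\delta)$ term for the range of $\rho$ selected below). Since the additive $+k$ contributes exactly $n^{\rho}$, it suffices to bound the first summand, so the whole statement reduces to showing $\lceil 1+6d'/\epsilon\rceil^{d'}=O(n^{\rho})$.

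First I would take logarithms to turn this into the linear inequality $d'\ln\lceil 1+6d'/\epsilon\rceil\le \rho\ln n+O(1)$, using $\ln\lceil 1+6d'/\epsilon\rceil\le \ln(6d'/\epsilon)+O(1)$. Substituting the bound on $d'$ and dividing by $\ln n$, the goal becomes
\begin{equation*}
\frac{C_0(1-\rho)}{\epsilon^2}\,\ln\!\Bigl(\tfrac{6C_0(1-\rho)\ln n}{\epsilon^3}\Bigr)\le \rho .
\end{equation*}
Writing $t=1-\rho$ and $L=\ln(\max(1/\epsilon,\ln n))$, I would then make the ansatz $t=\frac{\epsilon^2}{\hat c(\epsilon^2+L)}$, which is exactly the claimed bound, and verify that it satisfies the displayed inequality once $\hat c$ is large enough.

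The heart of the argument is controlling the logarithm after this substitution. Plugging in $t$ gives $6d'/\epsilon=\frac{6C_0\ln n}{\epsilon\,\hat c(\epsilon^2+L)}$, hence
\begin{equation*}
\ln\!\Bigl(\tfrac{6d'}{\epsilon}\Bigr)\le \ln\ln n+\ln\tfrac{1}{\epsilon}-\ln(\epsilon^2+L)+O(1).
\end{equation*}
Using $\ln\ln n\le L$ and $\ln(1/\epsilon)\le L$ (both hold since $\max(1/\epsilon,\ln n)$ is at least $1/\epsilon$ and at least $\ln n$), the right side is $O(L)+O(1)$, and I would conclude that $\ln(6d'/\epsilon)=O(\epsilon^2+L)$ uniformly: when $L$ is large the ratio $\frac{O(L)+O(1)}{\epsilon^2+L}$ tends to a constant, while when $L$ is small the max is close to $1/\epsilon\approx 1$, forcing $\epsilon^2\approx 1$, so the $\epsilon^2$ term in the denominator keeps the ratio bounded. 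Consequently $\frac{C_0 t}{\epsilon^2}\ln(6d'/\epsilon)=\frac{C_0}{\hat c}\cdot O(1)$, which lies below $\rho$ (itself close to $1$) once $\hat c$ exceeds a fixed multiple of $C_0$; the same choice forces $\rho<1$.

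I expect the main obstacle to be the implicit, self-referential character of $\rho$: it enters both through $d'$ and inside the logarithm, so no closed-form equation can be solved directly, and one must instead check consistency of the ansatz. The technical crux is the uniform bound $\ln(6d'/\epsilon)=O(\epsilon^2+L)$ across the two regimes of the $\max$; the additive $\epsilon^2$ in the denominator of the claimed $\rho$ is engineered precisely to tame the degenerate regime $L\to 0$ (equivalently $\epsilon\to 1$ with small $n$), where a bare $L$ in the denominator would break the estimate.
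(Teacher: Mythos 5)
Your proposal is correct and follows essentially the same route as the paper: set $k=n^{\rho}$, reduce to bounding $(c'd'/\epsilon)^{d'}=n^{\frac{\tilde c(1-\rho)}{\epsilon^2}\ln(\cdot)}$, and verify the ansatz $\rho=1-\epsilon^2/\hat c(\epsilon^2+\ln\max(1/\epsilon,\ln n))$. The only difference is presentational: the paper splits into the two cases $1/\epsilon\le\ln n$ and $1/\epsilon>\ln n$ and checks each exponent separately, whereas you handle both at once via the uniform estimate $\ln(6d'/\epsilon)=O(\epsilon^2+L)$, which is valid since $\epsilon^2+\ln(1/\epsilon)$ is bounded below by a positive constant on $(0,1)$.
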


\begin{proof} 
Recall that $d'\leq \frac{\tilde{c}}{\epsilon^2} \ln \frac{n}{k}$
for some appropriate constant $\tilde{c} >0$. 
Since $(\frac{d'}{\epsilon})^{d'}$ 
is a decreasing function of $k$, we need to choose $k$ s.t.\
$(\frac{d'}{\epsilon})^{d'} = \Theta(k)$. 
Let $k=n^\rho$. 
It is easy to see that
$\lceil 1+6\frac{d'}{\epsilon}\rceil^{d'}\leq (c'\frac{d'}{\epsilon})^{d'}$,
for some appropriate constant $c' \in (1,7)$.  
Then, by substituting $d',k$ we obtain: 
\begin{equation}\label{Eterm1ne}
\ln (c'\frac{d'}{\epsilon})^{d'}= { \frac{\tilde{c} (1-\rho)}{\epsilon^2}\ln ( \frac{\tilde{c} c'(1-\rho) \ln n}{\epsilon^3} )} \ln n.
\end{equation}

We assume $\epsilon \in (0,1)$ is a fixed constant. Hence, it is reasonable to assume that $\frac{1}{\epsilon}<n$.
We consider two cases when comparing $\ln n$ to $\epsilon$:
\begin{itemize}
\item
$\frac{1}{\epsilon} \leq \ln n$. Substituting $\rho=1-\frac{\epsilon^2}{2\tilde{c}(\epsilon^2+\ln ( c' \ln n))}$ into 
equation (\ref{Eterm1ne}), the exponent of $n$ is bounded as follows:
$$\frac{\tilde{c} (1-\rho)}{\epsilon^2}
\ln ( \frac{\tilde{c} c'(1-\rho) \ln n}{\epsilon^3} ) =
$$
$$={\frac{\tilde{c}  }{ 2\tilde{c}(\epsilon^2+\ln( c' \ln n))} \cdot 
\big( \ln (c' \ln n) + \ln{\frac{1}{\epsilon}} -\ln{(2\epsilon^2+2\ln (c' \ln n)})\big)}
<{\rho}.
$$
 \item
$\frac{1}{\epsilon}> \ln n$.
Substituting $\rho=1-\frac{\epsilon^2}{2\tilde{c}(\epsilon^2+\ln  \frac{c'}{\epsilon})}$ into equation (\ref{Eterm1ne}), the exponent of 
$n$ is bounded as follows:
$$\frac{\tilde{c} (1-\rho)}{\epsilon^2}
\ln ( \frac{\tilde{c} c'(1-\rho) \ln n}{\epsilon^3} ) = $$
$$
=  {\frac{\tilde{c}  }{ 2\tilde{c}(\epsilon^2+\ln \frac{c'}{\epsilon}) } \cdot
\big(  \ln  \ln n + \ln{\frac{c'}{\epsilon}} -\ln{(2\epsilon^2+2\ln \frac{c'}{\epsilon}})\big)}<  
{\rho}.
$$
\end{itemize}
\end{proof}

Notice that in both cases 
$$
d'=O(\frac{\log n}{\epsilon^2+\log \log n}).
$$

Combining Theorem~\ref{thmbad} with Lemma~\ref{lemBBD}
yields the following main theorem.

\begin{thm}\label{Tann}
Given $n$ points in $\RR^d$,
there exists a randomized data structure which requires $O(dn)$ space
and reports an $(1+\epsilon)^3$-approximate nearest neighbor in 
time $$O(d n^{\rho} \log n),\text{ where }
\rho \leq 1-\Theta({\epsilon^2}/{(\epsilon^2+\log(\max\{\frac{1}{\epsilon},\log n\}))}<1$$ 
The preprocessing time is $O(d n \log n)$.
For each query $q \in \mathbb{R}^d$, the preprocessing phase 
succeeds with any constant probability.
\end{thm}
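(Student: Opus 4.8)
The plan is to stitch together the two ingredients already established: the low-quality embedding of Theorem~\ref{thmbad}, instantiated with distortion $D=1+\epsilon$ and a constant failure probability $\delta$, and the BBD-tree of \cite{MA98} used to solve $\epsilon$-$k$ANNs in the reduced space, with the locality parameter $k$ fixed exactly as dictated by Lemma~\ref{lemBBD}. For preprocessing I would sample the random map $f:\mathbb{R}^d\to\mathbb{R}^{d'}$ of Theorem~\ref{thmbad}, store both the original pointset $X$ and its image $f(X)$, and build the BBD-tree on $f(X)\subset\mathbb{R}^{d'}$. Storing $X$ costs $O(dn)$, the matrix defining $f$ costs $O(dd')$, and the tree costs $O(d'n)$; since $d'<d$ the total is $O(dn)$ space, and the tree is built in $O(d'n\log n)=O(dn\log n)$ time, matching the claimed preprocessing bound.

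For a query $q$, I would compute $f(q)$ in time $O(dd')$, use the BBD-tree to retrieve a solution $S=\{f(p_1),\dots,f(p_k)\}$ to the $\epsilon$-$k$ANNs problem in $\mathbb{R}^{d'}$, and then evaluate the true distances $\|q-p_i\|$ in $\mathbb{R}^d$ for these $k$ candidates, returning the closest. Correctness follows directly: BBD-trees satisfy Assumption~\ref{assbbd} with approximation factor $c=1+\epsilon$, so Theorem~\ref{thmbad}, which guarantees that $f$ is a locality preserving embedding with distortion $D=1+\epsilon$ in the sense of Definition~\ref{Dlocpres}, ensures with probability $1-\delta$ that some $f(p_i)\in S$ has $p_i$ a $D\cdot c=(1+\epsilon)^2$-approximate nearest neighbor of $q$ in $X$. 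The final exact re-ranking in the original space then returns a point at least as good, i.e.\ an $(1+\epsilon)^2$-ANN, and since $\delta$ is an arbitrary constant this yields success with any constant probability.

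The complexity reduces to bounding the search cost plus the re-ranking cost. The BBD-tree reports the $k$ neighbors in time $O((\lceil 1+6 d'/\epsilon\rceil^{d'}+k)\,d'\log n)$, and here I would invoke Lemma~\ref{lemBBD}, which sets $k=n^{\rho}$ and shows $\lceil 1+6d'/\epsilon\rceil^{d'}+k=O(n^{\rho})$ with $\rho\leq 1-\epsilon^2/(\hat c(\epsilon^2+\ln(\max(1/\epsilon,\ln n))))$; re-ranking $k=n^{\rho}$ candidates costs $O(dn^{\rho})$. Collecting the $O(dd')$ projection cost, the $O(n^{\rho}d'\log n)$ search cost (bounded by $O(dn^{\rho}\log n)$ using $d'<d$), and the $O(dn^{\rho})$ re-ranking cost gives a dominant term of $O(dn^{\rho}\log n)$, as claimed.

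I expect the only genuinely delicate point to be the self-consistency of the parameters: the target dimension $d'=O(\log(n/(\delta k))/\epsilon^2)$ of Theorem~\ref{thmbad} depends on $k$, while the choice $k=n^{\rho}$ of Lemma~\ref{lemBBD} depends through $\rho$ on $d'$. I would therefore verify that substituting $d'\le(\tilde c/\epsilon^2)\ln(n/k)$ with $k=n^{\rho}$ is simultaneously feasible rather than circular, i.e.\ that a single pair $(d',k)$ satisfies both constraints; this is precisely what the exponent computation inside Lemma~\ref{lemBBD} secures, so the combination is sound provided the constants $\tilde c$ and $\hat c$ are chosen compatibly.
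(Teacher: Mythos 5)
Your proposal is correct and follows essentially the same route as the paper: build the BBD-tree on the image of the pointset under the embedding of Theorem~\ref{thmbad} with $D=1+\epsilon$ and $c=1+\epsilon$, set $k=n^{\rho}$ via Lemma~\ref{lemBBD}, and re-rank the $k$ candidates in the original space, with the same space, preprocessing, and query-time accounting (the paper additionally spells out that sampling and orthonormalizing the projection matrix costs $O(dd'^2)$, which is dominated since $d'=O(\log n)$). Your closing remark on the mutual consistency of $d'$ and $k$ is exactly the point Lemma~\ref{lemBBD} is designed to settle, so no gap remains.
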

\begin{proof}
The space required to store the dataset is $O(dn)$. The space used 
by BBD-trees is $O(d'n)$ where $d'$ is defined in Lemma~\ref{lemBBD}. 
We also need $O(dd')$ space for the matrix $A$ as specified in Theorem ~\ref{thmbad}. 
Hence, since $d'<d$ and $d'<n$, the total space usage is bounded above 
by $O(dn)$.
 
The preprocessing consists of building the BBD-tree which costs $O(d'n \log n)$ time 
and sampling $A$. Notice that we can sample a $d'$-dimensional 
random subspace in time $O(d d'^2)$ as follows. First, we sample in time $O(d d')$, 
a $d \times d'$ matrix where its elements are independent random variables 
with the standard normal distribution $N (0, 1)$. 
Then, we orthonormalize using Gram-Schmidt in time $O(d d'^2)$. Since $d'=O( \log n)$, 
the total preprocessing time is bounded by $O(d n \log n)$. 

For each query we use $A$ to project the point in time $O(d d')$. Next, we compute its 
$k=n^{\rho}$ approximate nearest neighbors in time $O(d' n^{\rho} \log n)$ and we check these neighbors with their 
$d$-dimensional coordinates in time $O(d n^{\rho})$. Hence, each query costs $O(d \log n+d' n^{\rho}\log n+dn^{\rho})=O(d n^{\rho} \log n)$ 
because $d'=O(\log n)$, $d'=O(d)$. Thus, the query time is dominated by the time required for $\epsilon$-$k$ANNs search and the time 
to check the returned sequence of $k$ approximate nearest neighbors.
\end{proof}

To be more precise, the probability of success, which is the probability that 
the random projection succeeds according to 
Theorem.~\ref{thmbad}, is at least constant and can be amplified to high probability of success 
with repetition. 
Notice that the preprocessing time for BBD-trees has no dependence on $\epsilon$.

\subsection{Finite subsets of $\ell_2$ with bounded expansion rate}\label{SSexpansion}

This subsection models some structure that the data points may have so as to obtain tighter bounds.

The bound on the dimension $d'$ obtained in Theorem~\ref{thmbad}
is quite pessimistic.
We expect that, in practice, the space dimension needed in 
order to have a sufficiently good projection is less than what
Theorem~\ref{thmbad} guarantees. Intuitively, we do not expect to have 
instances where all points in $X$, which are not
approximate nearest neighbors of $q$, lie at distance 
$\approx(1+\epsilon)d(q,X)$.
To this end, we consider the case of pointsets with bounded expansion rate.

\begin{dfn}
Let $M$ be a metric space and $X\subseteq M$ be a finite pointset and 
let $B_p(r)\subseteq X$ denote the points of $X$ lying in the closed
ball centered at $p$ with radius $r$. 
We say that $X$ has $({\tau},{c})$-expansion rate if and only if,
$\forall p \in M$ and $r>0$,
$$
|B_p(r)|\geq \tau \implies |B_p(2r)|\leq c \cdot |B_p(r)|.
$$
\end{dfn}

\begin{thm} \label{thmbad2}
Under the notation of Definition~\ref{Dlocpres}, 
there exists a randomized mapping $f:\mathbb{R}^d \rightarrow \mathbb{R}^{d'}$
which satisfies Definition~\ref{Dlocpres} for dimension
$d'=O(\log c)$,
distortion $D=(1+\epsilon)^2$ and constant probability of success, 
for pointsets with $(\tau,c)$-expansion rate.
 \end{thm}

\begin{proof}
We proceed in the same spirit as in the proof of Theorem~\ref{thmbad}. 

Let $X$ be a set of $n$ points in $\mathbb{R}^d$ and consider map
$$
f:\mathbb{R}^d \rightarrow \mathbb{R}^{d'} : v\mapsto \sqrt{d/{d'}}\cdot A\ v,
$$
where $A$ is a matrix chosen from a distribution as in Lemma~\ref{lemDG}.
Without loss of generality the query point $q$ lies at the origin and its 
nearest neighbor $u$ lies at distance $1$ from $q$. 
Let $r_0$ be the distance to the $\tau-$th nearest neighbor,
excluding neighbors at distance $\leq (1+\epsilon)^2$. 
For $i>0$, let $r_i=2 \cdot r_{i-1}$ and set $r_{0}=(1+\epsilon)^2$ (since, $r_0\geq (1+\epsilon)^2$). 

We distinguish the set of bad candidates according to whether they correspond to 
``close'' of ``far'' points in the initial space. More precisely,
$$
N_{close}= |\, \{ x \in X \, :\, \|x\| \in [r_0,r_1) \, \wedge\,
	 \|f(x)\| \leq {\beta}  \}\, |,
$$ 
$$
N_{far}= |\, \{ x \in X \, :\, \|x\|\geq r_1 \, \wedge\,
	 \|f(x)\| \leq \beta \}\, |,
$$ 
where $\beta=1+\epsilon$. 
Clearly, by Lemma~\ref{LemIneq}, and for $d'\geq \ln c +1$,

$$
\EE [N_{close}] \leq c\cdot \tau\cdot exp(-d'\cdot \epsilon^2/2) =O(c \cdot \tau),
$$
and similarly by Corollary~\ref{corDG}, 
$$
\EE[N_{far}]\leq 
\sum_{i=1}^{\infty} c^{i+1} \tau \cdot \Big(\frac{e (1+\epsilon)^2}{r_{i}^2}\Big)^{d'/2} 
\leq \sum_{i=1}^{\infty} c^{i+1} \tau \cdot \Big(\frac{e }{2^{2i}}\Big)^{d'/2}=
e^{d'/2} \cdot \tau \cdot c\sum_{i=1}^{\infty} c^{i}  \Big(\frac{1 }{2^{i}}\Big)^{d'}= O(\tau \cdot c^2).
$$
Finally, 
using Markov's inequality, we obtain constant probability of success.
\end{proof}

Employing Theorem~\ref{thmbad2} we obtain a result analogous to Theorem~\ref{Tann} which is
weaker than those in \cite{KL04,BKL06} but underlines the fact that
our scheme shall be sensitive to structure in the input data, for
real world assumptions.

\begin{thm}\label{TannExp}
Given $n$ points in $\ell_2^d$ 
with $(\tau,c)$-expansion rate, for some constant $c$,
there exists a randomized data structure which requires $O(dn)$ space
and reports an $(1+\epsilon)^3$-approximate nearest neighbor in time
$$
O((c^{\log (\log c/\epsilon)}+ \tau \cdot c^2\, )\, d \, \log n) ).
$$
The preprocessing time is $O(d n \log n)$.
For each query $q \in \mathbb{R}^d$, the preprocessing phase 
succeeds with constant probability.
\end{thm}

\begin{proof}
We combine the embedding of Theorem~\ref{thmbad2} with the BBD-trees. Then,
$$O\Big(\Big(\frac{\sqrt{d'}}{\epsilon}\Big)^{d'}\Big)=O\Big(\Big(\frac{\log c}{\epsilon}\Big)^{\log c}\Big)$$
and the number of approximate nearest neighbors in the projected space is
$$
k=O(\tau \cdot c^2).
$$
This proves the result.
\end{proof}

\section{Approximate Near Neighbor}\label{Snear}

This section combines the ideas developed in  Section~\ref{Sembed} with a simple, auxiliary data structure, namely the grid, yielding an efficient solution for the 
$(\epsilon,R)$-ANN problem.

\paragraph{Problem Definition}
Building a data structure for the Approximate Nearest Neighbor Problem reduces to 
building several data structures for the $(\epsilon,R)$-ANN Problem. For completeness, 
we include the corresponding theorem.

\begin{thm}{\rm\cite[Thm~2.9]{HIM12}}
Let $P$ be a given set of $n$ points in a metric space, and let $c = 1 + \epsilon > 1$, $f \in (0, 1)$, and
$\gamma \in (1/n, 1)$ be prescribed parameters.
Assume that we are given a data structure for the $(c, r)$-approximate near neighbor that uses space
$S(n, c, f )$, has query time $Q(n, c, f )$, and has failure probability $f $. Then there exists a data structure for
answering $c(1 + O(\gamma))$-NN queries in time $O(\log n)Q(n, c, f )$ with failure probability $O( f \log n)$. The
resulting data structure uses $O(S(n, c, f )/\gamma \cdot \log^2 n)$ space.
\end{thm}

In the following, the $\tilde{O} $ notation hides 
factors polynomial in $1/\epsilon$ and $\log n$.
\if 0
\begin{dfn}[$(c,R)$-ANN Problem]
\label{DfnAnn}
 Let $X\subset \mathbb{R}^d$ and $|X|=n$. Given $c>1$ and $R>0$, build a data structure 
 which for any query $q\in \mathbb{R}^d$:
 \begin{itemize}
  \item If $\exists p \in X$ s.t. $\|p-q\|\leq R$ then report $p' \in X$ s.t. $\|p'-q\|\leq cR$. 
  \item If $\forall p \in X$ $\|p-q\| > c R$ then return ``Fail''.
 \end{itemize}
\end{dfn}
\fi
When the dimension is high the problem has been solved efficiently by randomized methods based on the notion of LSH.
\begin{dfn}[$(c,R)$-ANN Problem (as studied in the high dimensional case)]
\label{annhigh}
 Let $X\subseteq \mathbb{R}^d$ and $|X|=n$. Given $\epsilon>0,r>0$, build a data structure 
 which for any query $q\in \mathbb{R}^d$ the probability that the building phase of the data structure succeeds for $q$ 
 is at least constant. 
\end{dfn}

A natural generalization of the $(\epsilon,R)$-ANN problem is the $k$-Approximate Near Neighbors Problem ($(\epsilon,R)$-$k$ANNs).

\begin{dfn}[$(\epsilon,R)$-$k$ANNs Problem]
\label{DfnkAnns}
 Let $X\subset \mathbb{R}^d$ and $|X|=n$. Given $\epsilon>0$, $R>0$, build a data structure 
 which, for any query $q\in \mathbb{R}^d$:
 \begin{itemize}
  \item if $|\{p \in X \mid \|q-p\|\leq R \}|\geq k$, then report $S \subseteq \{p \in X \mid \|q-p\|\leq (1+\epsilon)R \}$ s.t.\ $|S|= k$,
   \item if $|\{p \in X \mid \|q-p\|\leq R \}|<k$, then report $S \subseteq \{p \in X \mid \|q-p\|\leq (1+\epsilon)R \}$  s.t.\ 
     $|\{p \in X \mid \|q-p\|\leq R \}|\leq |S|\leq k$.
 \end{itemize}
\end{dfn}

The following algorithm is essentially the bucketing method which is described in~\cite{HIM12} and concerns the case $k=1$.
Impose a uniform grid of side length $\epsilon/\sqrt{d}$ on $\RR^d$. 
Clearly, the
distance between any two points belonging to one grid cell is at most $\epsilon$. Assume $r=1$. 
For each ball $B_q = \{x\in \RR^d \mid \|x-q\|\leq r\}$,
$q\in \RR^d$, let ${\overline{B_q}}$ be the set of grid cells that intersect $B_q$.   

In~\cite{HIM12}, 
they show that $|\overline{B_q}|\leq (C/\epsilon)^d$. 
Hence, the query time is the time to compute the hash function, 
retrieve near cells and report the $k$ neighbors: $$O(d+k+(C/\epsilon)^d).$$ 
The required space usage is $O(dn)$.

Furthermore, we are interested in optimizing this constant $C$. The bound on $|\overline{B_q}|$ follows from the following fact:
$$
|\overline{B_q}| \leq V_2^d(R),
$$
where $V_2^d(R)$ is the volume of the ball with radius $R$ in $\ell_2^d$,
and $R=\frac{2 \sqrt{d}}{\epsilon}$. Now, 
$$
V_2^d(R) \leq \frac{2\pi^{d/2}}{d \cdot\Gamma(d/2)} R^d=\frac{2\pi^{d/2}}{d(d/2-1)!} R^d \leq \frac{2\pi^{d/2}}{(d/2)!} R^d \leq 
 \frac{2\pi^{d/2}}{e(d/(2e))^{d/2}} R^d \leq \frac{2^{d+1}(18)^{d/2}}{ \epsilon^d e} \leq \frac{9^{d}}{ \epsilon^d } .
$$
Hence, $C\leq 9$.

\begin{thm}\label{ThmkAnns}
 There exists a data structure for the Problem \ref{DfnkAnns} with required space 
 $O( dn) $ and query time $O(d+k + (\frac{C}{\epsilon})^d))$, for $C\leq 9$.
\end{thm}

The following theorem is an analogue of Theorem~\ref{thmbad} for the Approximate Near Neighbor Problem.
\begin{thm}
 The $((1+\epsilon)^2c,R)$-ANN problem in $\mathbb{R}^d$ reduces to checking the solution set of the 
 $(c,(1+\epsilon)R)$-$k$ANNs problem in $\mathbb{R}^{d'}$, where 
 $d'=O({\log (\frac{n}{k})}/{\epsilon^2})$, by a randomized algorithm which succeeds with constant probability. \if 0 Preprocessing costs 
 an additional of $O(n^2)$ time and \fi The delay in query time is proportional to $d\cdot k$. 
\end{thm}

\begin{proof} 
The theorem can be seen as a direct implication of Theorem~\ref{thmbad}. The proof is indeed the same. 

Let $X$ be a set of $n$ points in $\mathbb{R}^d$ and consider map
$$
f:\mathbb{R}^d \rightarrow \mathbb{R}^{d'} : v\mapsto \sqrt{d/{d'}}\cdot A\ v,
$$
where $A$ is a matrix chosen from a distribution as in Lemma~\ref{lemDG}.
Let $u\in X$ a point at distance $1$ from $q$ and assume without loss of generality that lies at the origin.
Let $N$ be the random variable whose value indicates the number
of ``bad'' candidates, that is 
$$
N= |\, \{ x \in X \, :\, \|x\|>\gamma \, \wedge\,
	 \|f(x)\| \leq \beta \}\, |,
$$ 
where we define $\beta=c(1+\epsilon)$, $\gamma=c(1+\epsilon)^2$.
Hence, by Lemma~\ref{lemDG} and Lemma~\ref{LemIneq}, 
$$
\mathbb{E}[N]\leq n \cdot exp(\frac{d'}{2}(1-\frac{\beta^2}{\gamma^2}+2 \ln \frac{\beta}{\gamma}))\leq {n} \cdot  exp(-d' \cdot \epsilon^2/2).
$$ 
The probability of failure is at most equal to
$$
\mathrm{Pr}[N \geq k] + exp(\frac{d'}{2}(1-(\beta/c)^2+2 \ln (\beta/c))) ,
$$
by applying again Lemma~\ref{lemDG}.  Now, we bound these two terms. 
By Markov's inequality,
$$
\mathrm{Pr}[N\geq k] \leq \frac{n}{k} \cdot  exp(-d' \cdot \epsilon^2/2) \stackrel{d'\geq 2 \ln (\frac{6n}{k})/\epsilon^2}{\leq}\frac{1}{6}.
$$

In addition,
$$
exp(\frac{d'}{2}(1-(\beta/c)^2+2 \ln (\beta/c)))\stackrel{d'\geq 2 \ln (\frac{6n}{k})/\epsilon^2}{\leq}\Big(\frac{k}{6n}\Big)^{2-2\epsilon/3} <\frac{1}{6}
$$

Hence, there exists $d'$ such that
$$
d'=O({\log \frac{n}{ k}} / {\epsilon^2})
$$
and with probability at least $2/3$, these two events occur:
\begin{itemize}
\item $ \|f(q)-f(u)\|\leq (1+\epsilon) .$
\item $ | \{ p \in X | \|p-q\| > c(1+\epsilon)^2  \implies \|f(q)-f(p)\|\leq c(1+{\epsilon}) \}| <k.$
\end{itemize}
\end{proof}

\subsection{Finite subsets of $\ell_2$}\label{SSnear}
We are about to show what Theorem \ref{thmbadnear} implies for the data structure from Theorem \ref{ThmkAnns}.

\begin{thm}
 There exists a data structure for the Problem \ref{annhigh} with ${O}(dn)$ required space and preprocessing time,  
and query time $\tilde{O}(d n^{\rho} )$
, where $\rho=1-\Theta({\epsilon^2}/({\log (1/\epsilon)+\epsilon^2}))<1$.
\label{thmbadnear}
\end{thm}
\begin{proof}
 $$\left(\frac{C}{\epsilon}\right)^{d'} \leq \left(\frac{C}{\epsilon}\right)^{20\ln \frac{20n}{k}/\epsilon^2}=\left(\frac{20n}{k}\right)^{20\ln\frac{C}{\epsilon}/\epsilon^2} 
$$
and for $$k\geq 20n^{1-{\epsilon^2}/({20\ln (C/\epsilon)+\epsilon^2})} \implies \left(\frac{C}{\epsilon}\right)^{d'} \leq   n^{1-{\epsilon^2}/({20\ln (C/\epsilon)+\epsilon^2})}.$$

\if 0
Hence there exists a data structure for the Problem \ref{DfnAnn} with required space $\tilde{O}(d n +n^{1+\rho})$ 
and query time $\tilde{O}(d n^\rho)$ where $\rho= \frac{20\ln (C/\epsilon)}{20\ln (C/\epsilon)+\epsilon^2}$.
\fi
Since, the data structure succeeds only with 
probability $9/10$, it suffices to build it $O({\log n})$ times in order 
to achieve high probability of success.
\end{proof}

\subsection{The case of doubling subsets of $\ell_2$}\label{SDdembed}
In this section, we generalize the idea from \cite{AEP15} for pointsets with bounded doubling dimension to 
obtain non-linear randomized embeddings for the $(\epsilon,R)$-ANN problem. 
\begin{dfn}
The doubling dimension of a metric space $M$ is the smallest positive integer $ddim(M)$ 
such that every set $S$ with diameter $D_S$ can be covered by $2^{ddim(M)}$ (the doubling constant) sets of diameter $D_S/2$. 
\end{dfn}

Now, let $X\subset \RR^d$ s.t. $|X|=n$ and $X$ has doubling constant $\lambda_X=2^{ddim(X)}$. Consider also $S_i \subseteq X$ with diameter $2r_i$. 
Then we need $\lambda_X^{\log \frac{8 r_i}{\epsilon}}$ tiny balls $b_\epsilon\subseteq X$ of diameter $\epsilon/4$ 
in order to cover $S_i$. 
We can assume that $R=1$, since we can scale $X$. The idea is that we first compute $X'\subseteq X$ which satisfies the following two properties:
\begin{itemize}
 \item $\forall p,q \in X'$ $\|p-q\|>\epsilon/8$,
 \item $\forall q\in X ~\exists p \in X'$ s.t. $\|p-q\| \leq \epsilon/8$.
\end{itemize}
This is an $r$-net for $X$ for $r=\epsilon/8$. 
The obvious naive algorithm computes $X'$ in $O(n^2)$ time. Better algorithms exist for the case 
of low dimensional Euclidean space \cite{H04}. 
Approximate $r$-nets can be also computed in time 
$2^{O(ddim(X))}n \log n$ for doubling metrics \cite{HPM05}
, assuming 
that the distance can be computed in constant time.

Then, for $X'$ we know that each $S_i\subseteq X'$ contains $\leq \lambda_X^{\log \frac{8r_i}{\epsilon}}$ points, since 
$X'\subseteq X \implies ddim(X')\leq ddim(X)$.

\begin{thm}
 The $((1+\epsilon)^2c,R)$-ANN problem in $\mathbb{R}^d$ reduces to checking the solution set of 
 the $(c,(1+\epsilon)R)$-$k$ANNs problem in $\mathbb{R}^{d'}$, where 
 $d'=O(ddim(X))$ and $k=(2/\epsilon)^{O(ddim(X)}$, by a randomized algorithm which succeeds with constant 
 probability. Preprocessing costs 
 an additional of $O(n^2)$ time and the delay in query time is proportional to $d\cdot k$. 
\end{thm}

\begin{proof}
Once again we proceed in the same spirit as in the proof of Theorem~\ref{thmbad}. 

Let $X'$ be an $\epsilon/8$-net of $X$.  
Let $r_i={2}^{i+1}(1+\epsilon)$ for $i\geq0$ and let
$B_p(r)\subseteq X'$ denote the points of $X'$ lying in the closed
ball centered at $p$ with radius $r$. We assume $0<\epsilon\leq 1/2$. 
We make use of Corollary \ref{corDG}.
$$
\mathbb{E}[N_{far}]  \leq \;
\sum_{i=2}^{\infty} |B_p(r_i)|  \cdot (\sqrt{e} \cdot \frac{(1+\epsilon)}{r_{i-1}})^{d'}
 \leq \;
\sum_{i=2}^{\infty}  \lambda_X^{\log (8r_i/\epsilon)}
\cdot(\frac{\sqrt{e}}{2^{i}})^{d'} \leq   \lambda_X^{1+\log(16/\epsilon)} \cdot e^{d'/2} \sum_{i=2}^{\infty}  \frac{\lambda_X^i}{2^{i \cdot d'}}
=
$$
$$
\stackrel{d'\geq 1+\log \lambda_X}{=}2^{O(ddim(X) \log(2/\epsilon))}=\Big(\frac{2}{\epsilon}\Big)^{O(ddim(X))}
$$
In addition,
$$ \EE[N_{close}]\leq \lambda_X^{O(\log(1/\epsilon))} \cdot exp(-d' \cdot \epsilon^2 /20 )\leq \lambda_X^{O(\log(1/\epsilon))}=
\Big(\frac{2}{\epsilon}\Big)^{O(ddim(X))}.$$
The number of grid cells of sidewidth $\epsilon/\sqrt{d'}$ intersected by a ball of radius $1$ in $\RR^{d'}$ 
is also $(\frac{2}{\epsilon})^{O(ddim(X))}$.
Notice, that if there exists 
a point in $X$ which lies at distance $1$ from $q$, then 
there exists a point in $X'$ which lies at distance $1+\epsilon/8$ from $q$. 
Finally the probability that the distance between the query point $q$ and  
one approximate near neighbor gets arbitrarily expanded is less than 
${\lambda_X^{-\Theta(\epsilon^2)}}$.
\end{proof}

Now using the above ideas we obtain a data structure for the $(\epsilon,R)$-ANN problem. 
\begin{thm}
There exists a data structure which solves the approximate nearest neighbor problem which requires 
space and preprocessing time ${O}(d n)$ 
and the query costs 
$$d \Big(\frac{2}{\epsilon} \Big)^{O(ddim(X))}   .$$ 
For fixed $q\in \RR^d$, the building process of the data structure succeeds with constant
probability.
\end{thm}

\begin{figure}[ht] 
\centerline{\includegraphics[scale=0.42]{./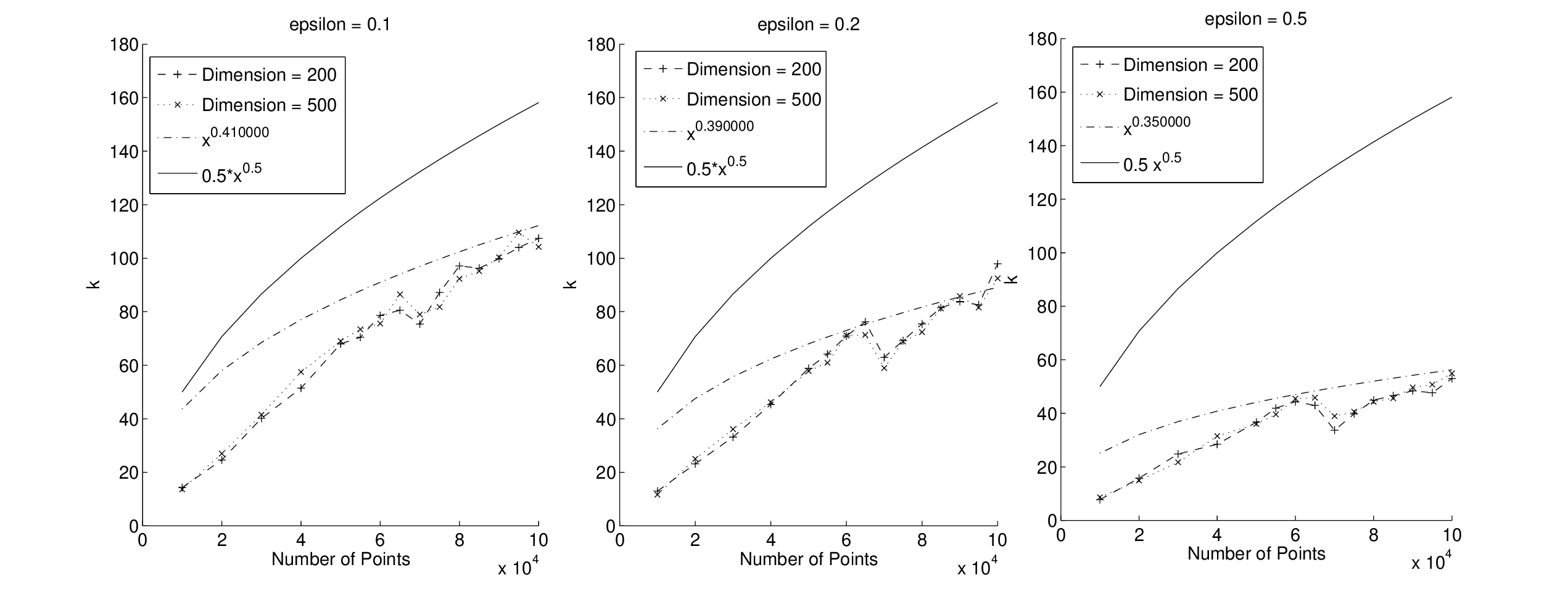}}
\caption{Plot of $k$ as $n$ increases for the ``planted nearest neighbor model'' datasets. 
The highest line corresponds to $\frac{\sqrt{n}}{2}$ 
and the dotted line to a function of the form $n^\rho,\;\text{where}\;\rho = 0.41, 0.39, 0.35$ that best fits the data.\label{fexp}}
\end{figure}

\section{Experiments}\label{Sexperiment}

In this section we discuss two experiments we performed with the prototype implementation of our method for approximate nearest neighbor search described in section~\ref{Sann}, to validate the theoretical results of our contributions.
In the first experiment, we computed the average value of the $k$ nearest neigbors needed to check in the projected space in order to get an actual nearest neighbor in the original space in a worst-case dataset for the ANN problem, and we confirmed that it is indeed sublinear in $n$. In the second experiment, we made an ANN query time and memory usage comparison against E2LSH using both artificial and real life datasets.

\subsection{Validation of $k$}
In this section we present an experimental verification of our approach. 
We show that the number $k$ of nearest neighbors in the projection space that we need to examine in order to find
an approximate nearest neighbor in the original space depends
sublinearly on $n$, thus validating in practice lemma~\ref{lemBBD}.

\paragraph*{Datasets} 
We generated our own synthetic datasets and query points. We decided 
to follow two different procedures for data generation.
First of all, as in \cite{DI04}, we followed the ``planted nearest neighbor model''. 
This model guarantees, for each query point $q$, the existence
of a few approximate nearest 
neighbors while keeping all others points sufficiently far from $q$.
The benefit of this approach 
is that it represents a typical ANN search scenario, where
for each point there exist only a handful 
approximate nearest neighbors. In contrast, in a uniformly generated
dataset, all points tend to be equidistant to each other
in high dimensions, which is quite unrealistic.

In order to generate the dataset, first we create a set $Q$
of query points chosen uniformly at 
random in $\mathbb{R}^d$. Then, for each point $q\in Q$,
we generate a single point $p$ at distance 
$R$ from $q$, which will be its single (approximate) nearest neighbor.
Then, we create more points at distance
$\ge (1+\epsilon)R$ from $q$, while making sure that they shall not be
closer than $(1+\epsilon)R$ to any other query point $q' \in Q, q' \neq q$.
This dataset now has the property that every query point has exactly
one approximate nearest neighbor, 
while all other points are at distance $\ge (1+\epsilon)R$.

We fix $R=2$, let $\epsilon \in \{0.1, 0.2, 0.5\},
d = \{200, 500\}$ and the total number of points
$n \in \{10^4, 2\times10^4, \ldots, 5\times10^4, 5.5\times10^4, 6\times10^4, 6.5\times10^4, \ldots, 10^5\}$. 
For each combination of the above we created a dataset $X$ from a set $Q$ of $100$ query points where each query 
coordinate was chosen uniformly at random in the range $[-20, 20]$.

The second type of datasets consisted again of sets of $100$ query points in $\mathbb{R}^d$ where 
each coordinate was chosen uniformly at random in the range $[-20, 20]$. Each query point was paired with a 
random variable $\sigma^2_q$ uniformly distributed in $[15, 25]$ and together they specified a gaussian distribution 
in $\mathbb{R}^d$ of mean value $\mu = q$ and variance $\sigma^2_q$ per coordinate. For each distribution we drew $n$ 
points in the same set as was previously specified.

\begin{figure}[!t] 
\centerline{\includegraphics[scale=0.42]{./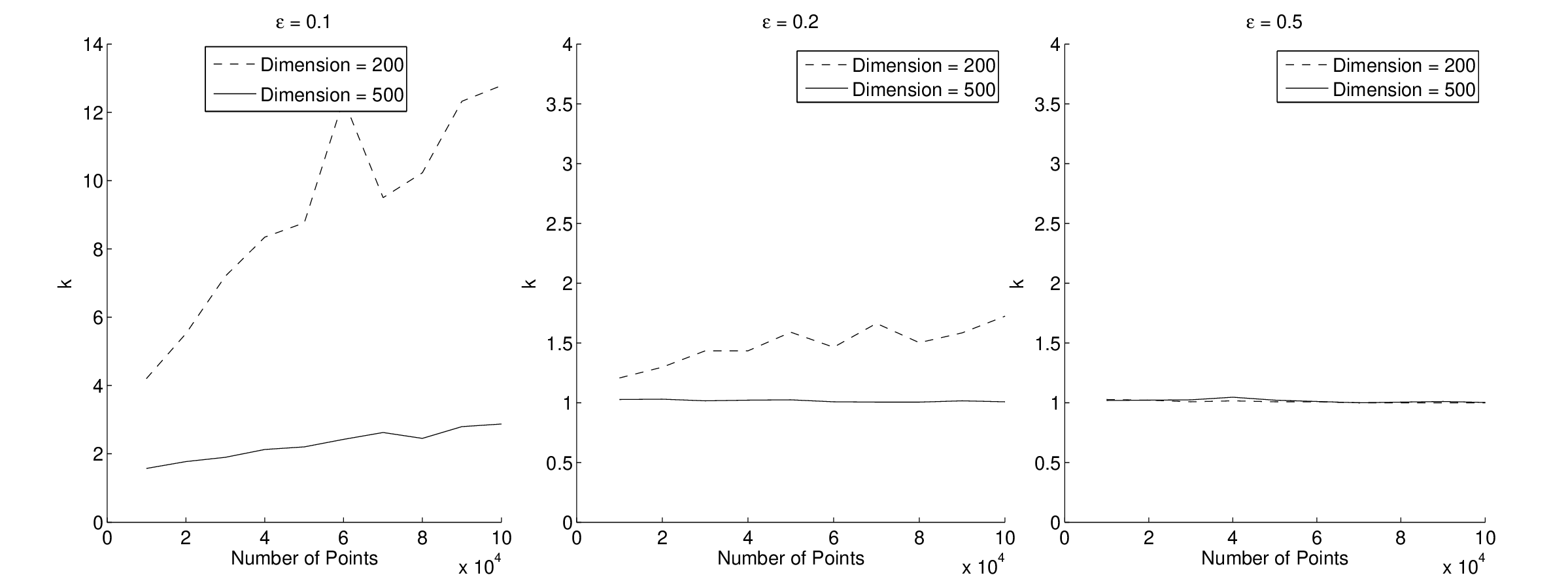}}
\caption{Plot of k as $n$ increases for the gaussian datasets. 
We see how increasing the number of approximate nearest neighbors in this case decreases
the value of $k$\label{fexpgm}}
\end{figure}
\paragraph*{Scenario}
We performed the following experiment for the ``planted nearest neighbor model''.
In each dataset $X$, we consider, for every query point $q$,
its unique (approximate) nearest neighbor $p \in X$. 
Then we use a random mapping $f$ from $\RR^d$ to a Euclidean space
of lower dimension
$d' = \frac{\log n}{\log\log n}$ using a gaussian matrix $G$, where each 
entry $G_{ij} \sim N(0,1)$. This matrix guarantees a
low distortion embedding \cite{IM98}. Then, we perform a range query centered 
at $f(q)$ with radius $\|f(q) - f(p)\|$ in $f(X)$:
we denote by $rank_q(p)$ the number of points found.
Then, exactly $rank_q(p)$ points are needed to be selected
in the worst case as $k$-nearest neighbors of $f(q)$ in order for the
approximate nearest neighbor $f(p)$ to be among them, so $k=rank_q(p)$.

For the datasets with the gaussian distributions we compute again the maximum number 
of points $k$ needed to visit in the lower-dimensional space in order 
to find an $\epsilon$-approximate nearest neighbor of each query point $q$ in the 
original space. In this case the experiment works as follows: we find all the $\epsilon$-approximate 
nearest neighbors of a query point $q$. Let $S_q$ be the set containing for each query $q$ its $\epsilon$-$k$ANNs. 
Next, let $p_q = \operatorname*{arg\,min}_{p \in S} \|f(p) - f(q)\|$. Now as before we perform a range query 
centered at $f(q)$ with radius $\|f(q) - f(p_q)\|$. We consider as $k$ the number of points returned by this query.

\paragraph*{Results}
The ``planted nearest neighbor model'' datasets constitute a 
worst-case input for our approach since every query point has only one 
approximate nearest neighbor and has many points lying near the boundary of 
$(1+\epsilon)$. We expect that the number of $k$ approximate nearest neighbors needed 
to consider in this case will be higher than in the case of the gaussian distributions, 
but still expect the number to be considerably sublinear.

In Figure~\ref{fexp} we present the average value of $k$ as we increase the number of points $n$ 
for the planted nearest neighbor model. We can see that $k$ is indeed significantly smaller
than $n$. The line corresponding to the averages may not be smooth, which is unavoidable due to the 
random nature of the embedding, but it does have an intrinsic concavity, which shows that 
the dependency of $k$ on $n$ is sublinear. For comparison we also display the function $\sqrt{n}/2$, as well 
as a function of the form $n^\rho, \rho<1$ which was computed by SAGE that best fits the data per plot. 
The fitting was performed on the points in the range $[50000, 100000]$ as to better capture the asymptotic behavior. 
In Figure~\ref{fexpgm} we show again the average value of $k$ as we increase the number of points $n$ for 
the gaussian distribution datasets. As expected we see that the expected value of $k$ is much smaller than 
$n$ and also smaller than the expected value of $k$ in the worst-case scenario, which is the planted nearest neighbor model.

\subsection{ANN experiments}
In this section we present a preliminary comparison between our algorithm and the E2LSH \cite{AI05} 
implementation of the LSH framework for approximate nearest neighbor queries.

\paragraph*{Experiment Description}
We projected all the ``planted nearest neighbor'' datasets, down to $\frac{\log n}{\log \log n}$ dimensions. 
We remind the reader that these datasets were created to have a single approximate nearest neighbor for each query 
at distance $R$ and all other points at distance $> (1+\epsilon) R$. We then built a BBD-tree data structure on the 
projected space using the {\tt ANN} library \cite{Mou10} with the default settings.  Next, we measured the average time needed for 
each query $q$ to find its $\epsilon$-$k$ANNs, for $k=\sqrt{n}$, using the BBD-Tree data structure and then to select the first 
point at distance $\leq R$ out of the $k$ in the original space. 
We compare these times to the average times reported by {E2LSH} range queries for $R = 2$, when used from its default 
script for probability of success $0.95$. The script first performs an estimation of the best parameters for the dataset and 
then builds its data structure using these parameters. 
We required from the two approaches to have accuracy $>0.90$, which in our case means that in at least $90$ out of the $100$ 
queries they would manage to find the approximate nearest neighbor.
We also measured the maximum resident set size of each approach which translates to the maximum portion of the main memory (RAM) 
occupied by a process during its lifetime. This roughly corresponds to the size of the dataset plus the size of the data structure 
for the E2LSH implementation and to the size of the dataset plus the size of the embedded dataset plus the size of the data structure 
for our approach. 
\paragraph*{ANN Results}
It is clear from Figure~\ref{fann} that E2LSH is faster than our approach by a factor of $3$. However in Figure~\ref{fmem}, where 
we present the memory usage comparison between the two approaches, it is obvious that E2LSH also requires more space. Figure~\ref{fmem} 
also validates the linear space dependency of our embedding method.
A few points can be raised here. First of all, we supplied the appropriate range to the LSH implementation, 
which gave it an advantage, because typically that would have to be computed empirically. To counter that, 
we allowed our algorithm to stop its search in the original space when it encountered a 
point that was at distance $\leq R$ from the query point. 
Our approach was simpler and the bottleneck was in the computation of the closest point out 
of the $k$ returned from the BBD-Tree. We conjecture that we can choose better values for our parameters $d'$ and $k$. Lastly, 
the theoretical guarantees for the query time of LSH are better than ours, but we did perform better in terms of space usage as expected. 

\begin{figure}[!ht] 
\centerline{\includegraphics[scale=0.42]{./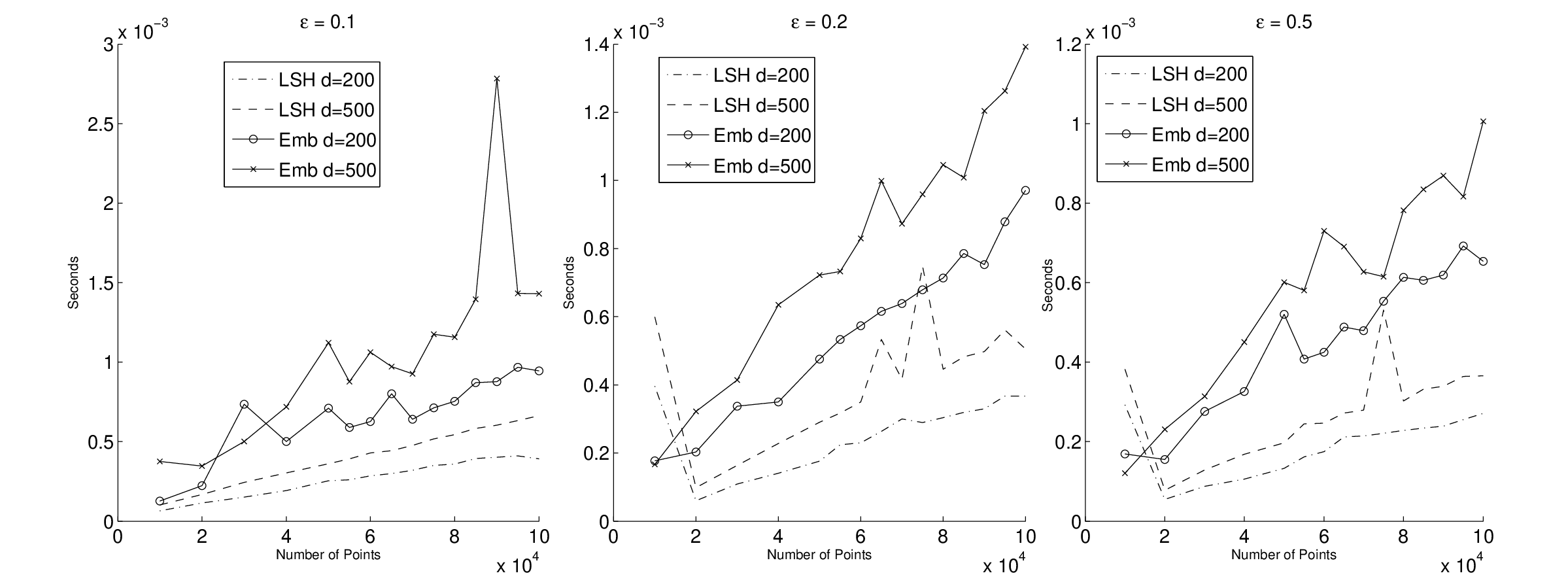}}
\caption{Comparison of average query time of our embedding approach against the E2LSH implementation.\label{fann}}
\end{figure}

\begin{figure}[!ht] 
\centerline{\includegraphics[scale=0.37]{./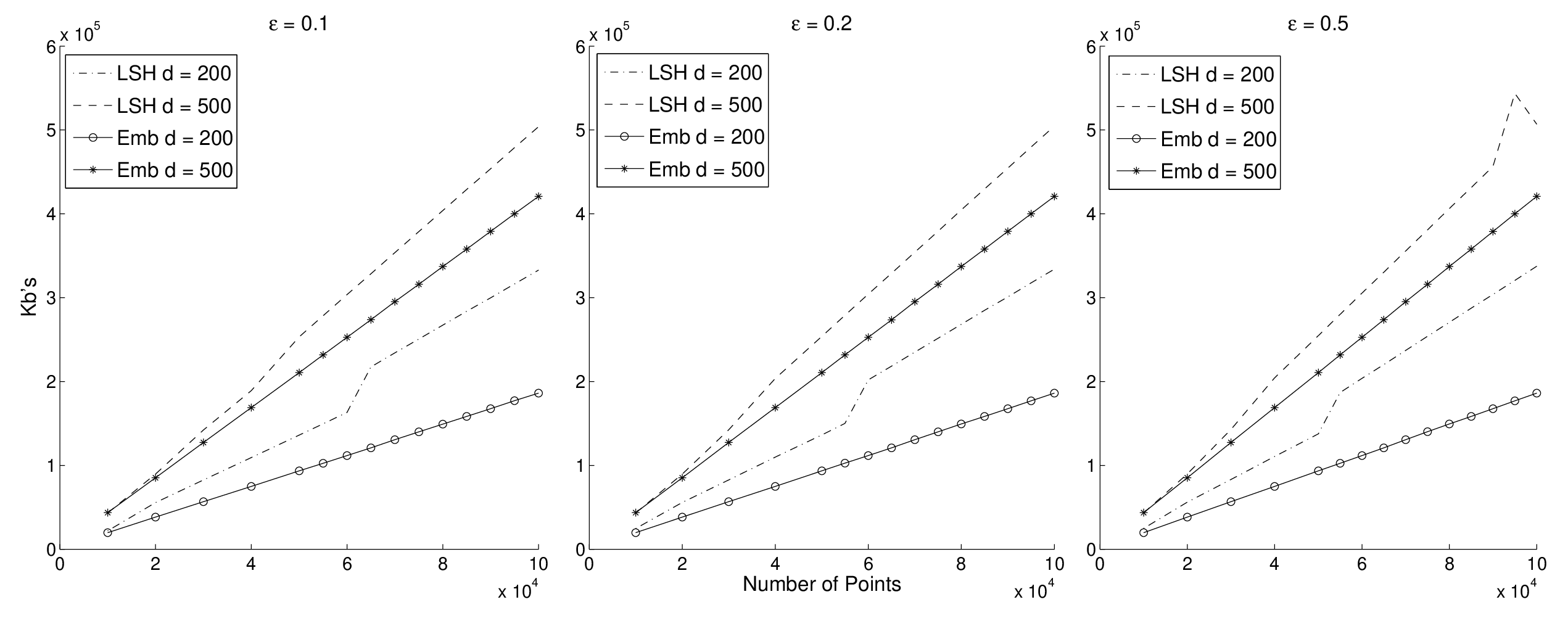}}
\caption{Comparison of memory usage of our embedding approach against the E2LSH implementation.\label{fmem}}
\end{figure}
\paragraph*{Real life dataset}
We also compared the two approaches using the ANN\_SIFT1M~\cite{sift} dataset which contains a collection of $1,000,000$ vectors in $128$ dimensions. 
This dataset also provides a query file containing $10,000$ vectors and a groundtruth file, which contains for each query the IDs of its $100$
nearest neighbors. These files allowed us to estimate the accuracy for each approach, as the fraction 
${\#hits}/{10000}$ where $\#hits$ denotes, for some query, the number of times one of its $100$ nearest neighbors were returned. The parameters 
of the two implementations were chosen empirically in order to achieve an accuracy of about $85\%$. For our approach we set the projection 
dimension $d' = 25$ and for the BBD-trees we specified $100$ points per leaf and $\epsilon = 0.5$ for the $\epsilon$-$k$ANNs queries. 
We also used $k = \sqrt{n}$. For the E2LSH implementation we specified the radius $R = 240$, $k = 18$ and $L = 250$. 
As before, we measured the average query time and the maximum resident set size. Our approach required an average of $171.59$msec per query, 
whilst E2LSH required $51.96$msec. However our memory footprint was about $1.256$ Gbytes and E2LSH used about $4.781$ Gbytes.

\section{Open questions}\label{Sopen}

The present work has emphasized asymptotic complexity bounds, and showed that rather simple methods, carefully combined with a new embedding approach, can achieve almost record query times with optimal space usage.
However, it should still be possible to enhance the practical performance of our method so as to unleash the potential of our approach and fully exploit its simplicity.
This is the topic of future work, along with a detailed comparative study with other optimized implementations, which is beyond the scope of this paper.

In particular, checking the real distance of the query point to the neighbors, while 
performing an $\epsilon$-$k$ANNs search in
the projection space, is more efficient in practice than naively scanning the returned sequence of 
$k$-approximate nearest neighbors, and looking for the closest point in the initial space. Moreover, our algorithm does not exploit the fact that 
BBD-trees return a sequence and not simply a set of neighbors.  

Our embedding approach probably has further applications. One possible application is in computing the $k$-th approximate nearest neighbor.
The problem may reduce to 
computing all neighbors between the $i$-th and the $j$-th nearest neighbors in a space of significantly smaller 
dimension for some appropriate values $i<k<j$. Other possible applications include computing the approximate minimum spanning tree, or 
the closest pair of points.

\if 0
Our embedding approach could be possibly applied with other metrics or could exploit other properties of the pointset. We also intend 
to seek connections between our work and the notion of local embeddings introduced in \cite{ABN07}.
\fi



\bibliographystyle{alpha}
\bibliography{jlann}

\end{document}